\newenvironment{cproof}{\proof[Proof of claim]}{\endproof}
\definecolor{RED}{RGB}{255,0,0}
\newcommand{\NP}{\mathcal{NP}}
\newcommand{\parent}{{\sf parent}\xspace}
\newcommand{\child}{{\sf children}\xspace}
\renewcommand{\root}{{\sf root}\xspace}
\newcommand{\anc}{{\sf ancestors}\xspace}
\newcommand{\descendants}{{\sf desc}\xspace}
\newcommand{\rot}{{\sf rot}\xspace}
\newcommand{\dist}{{\sf dist}\xspace}
\newcommand{\diam}{{\sf diam}\xspace}
\newcommand{\Bcb}{B_{\sf cb}\xspace}
\newcommand{\trace}{{\sf trace}\xspace}
\newcommand{\typechild}{{\sf type\text{-}children}\xspace}
\newcommand{\wantparent}{{\sf {\sf want{\text -}parent}}\xspace}
\newcommand{\Ocal}{{\mathcal O}\xspace}
\newlength{\RoundedBoxWidth}
\newsavebox{\GrayRoundedBox}
\newenvironment{GrayBox}[1]%
   {\setlength{\RoundedBoxWidth}{.93\textwidth}
    \def\boxheading{#1}
    \begin{lrbox}{\GrayRoundedBox}
       \begin{minipage}{\RoundedBoxWidth}}%
   {   \end{minipage}
    \end{lrbox}
    \begin{center}
    \begin{tikzpicture}%
       \node(Text)[draw=black!20,fill=white,rounded corners,%
             inner sep=2ex,text width=\RoundedBoxWidth]%
             {\usebox{\GrayRoundedBox}};
        \coordinate(x) at (current bounding box.north west);
        \node [draw=white,rectangle,inner sep=3pt,anchor=north west,fill=white]
        at ($(x)+(6pt,.75em)$) {\boxheading};
    \end{tikzpicture}
    \end{center}}
\newenvironment{defproblemx}[2][]{\noindent\ignorespaces%
                                \FrameSep=6pt%
                                \parindent=0pt%
                \vspace*{-1.5em}
                \ifthenelse{\isempty{#1}}{%
                  \begin{GrayBox}{\textsc{#2}}%
                }{%
                  \begin{GrayBox}{\textsc{#2} parameterized by~{#1}}%
                }
                \begin{tabular*}{\textwidth}{@{\hspace{.1em}} >{\itshape} p{1.8cm} p{0.8\textwidth} @{}}%
            }{
                \end{tabular*}%
                \end{GrayBox}%
                \ignorespacesafterend
            }
\newcommand{\defproblema}[3]{%
  \begin{defproblemx}{#1}
    {\bf Instance:}  & #2 \\
    {\bf Question:} & #3
  \end{defproblemx}
}%
\newcommand{\defproblemaparam}[4]{%
  \begin{defproblemx}{#1}
    {\bf Instance:}  & #2 \\
    {\bf Parameter:}  & #3 \\
    {\bf Question:} & #4
  \end{defproblemx}
}%
\newtheorem*{customprop*}{Proposition}
\newtheorem*{customlmm*}{Lemma}
\newtheorem*{customdef*}{Definition}
\newtheorem*{customthm*}{Theorem}
\newtheorem*{customclaim*}{Claim}
\newtheorem*{customcor*}{Corollary}
\newtheorem*{customobs*}{Observation}
  \definecolor{mid-green}{rgb}{0.15,0.65,0.15}
 \definecolor{dark-green}{rgb}{0.15,0.25,0.15}
 \definecolor{dark-red}{rgb}{0.7,0.15,0.15}
 \definecolor{dark-blue}{rgb}{0.15,0.15,0.9}
 \definecolor{medium-blue}{rgb}{0,0,0.5}
 \definecolor{gray}{rgb}{0.5,0.5,0.5}
 \definecolor{color-Ig}{rgb}{0.15,0.7,0.15}
 \definecolor{darkmagenta}{rgb}{0.30, 0.0, 0.30}
 \definecolor{blue}{rgb}{0.15,0.15,0.9}
\newcommand{\ig}[1]{\textcolor{red}{[Ig: #1]}}
\renewcommand{\NP}{{\sf NP}\xspace}
\renewcommand{\P}{{\sf P}\xspace}
\newcommand{\FPT}{{\sf FPT}\xspace}
\newcommand{\kelimination}{{\sc Rotation Distance}\xspace}
\title{Computing Distances on Graph Associahedra is Fixed-parameter Tractable}
\title{Computing distances is FPT on graph associahedra and W[2]-hard on hypergraphic polytopes}
\author{Lu\'is Felipe I. Cunha}{Instituto de Computação, Universidade Federal Fluminense, Brasil \and \url{http://www.ic.uff.br/~lfignacio} }{lfignacio@ic.uff.br}{https://orcid.org/0000-0002-3797-6053}{FAPERJ-JCNE (E-26/201.372/2022) and~CNPq-Universal~(406173/2021-4).}
\author{Ignasi Sau}{LIRMM, Université de Montpellier, CNRS, France \and \url{https://www.lirmm.fr/~sau/} }{ignasi.sau@lirmm.fr}{https://orcid.org/0000-0002-8981-9287}{French project \textsc{ELiT} (ANR-20-CE48-0008-01).}
\author{U\'everton S. Souza}{Instituto de Computação, Universidade Federal Fluminense, Brasil \and IMPA - Instituto de Matem\'atica Pura e Aplicada, Brasil \and \url{http://www.ic.uff.br/~ueverton} }{ueverton@ic.uff.br}{https://orcid.org/0000-0002-5320-9209}{CNPq (312344/2023-6), and FAPERJ (E-26/201.344/2021).}
\author{Mario Valencia-Pabon}{Université de Lorraine, CNRS, Inria, LORIA, F-54000 Nancy, France \and \url{https://lipn.univ-paris13.fr/~valenciapabon/}}
{mario.valencia@loria.fr}{https://orcid.org/0009-0006-0564-4341}{French project \textsc{Abysm} (ANR-23-CE48-0017).}
\authorrunning{L. Cunha, I. Sau, U. S. Souza, and M. Valencia-Pabon}
\keywords{graph associahedra, elimination tree, rotation distance, parameterized complexity, fixed-parameter tractable algorithm, combinatorial shortest path, reconfiguration, hypergraphic polytopes, orientation of hypergraphs, hardness of approximation, polynomial kernels.}
\begin{document}

\maketitle

\begin{abstract}
An elimination tree of a connected graph $G$ is a rooted tree on the vertices of $G$ obtained by choosing a root~$v$ and recursing on the connected components of $G-v$ to obtain the subtrees of $v$. The graph associahedron of $G$ is a polytope whose vertices correspond to elimination trees of $G$ and whose edges correspond to tree rotations, a natural operation between elimination trees. These objects generalize associahedra, which correspond to the case where $G$ is a path. Ito et al.~[SIAM J. Discrete Math. 2026] recently proved that the problem of computing distances on graph associahedra is \NP-hard. In this paper we prove that the problem, for a general graph $G$,  is fixed-parameter tractable (\FPT) parameterized by the distance~$k$. Prior to our work, only the case where $G$ is a path was known to be \FPT. To prove our result, we use a novel approach based on a marking scheme that restricts the search to a set of vertices whose size is bounded by a (large) function of $k$.

On the negative side, we  show that it is unlikely that \FPT algorithms exist on a natural generalization of graph associahedra, namely hypergraphic polytopes, by proving that computing distances on them is ${\sf W}[2]$-hard parameterized by the distance. We also prove that, on hypergraphic polytopes, the distance cannot be approximated in polynomial time within a factor $c \cdot \log(|V|+|\mathcal{E}|)$ for some constant $c > 0$ unless $\P = \NP$, where $H=(V, \mathcal{E})$ is the input hypergraph. This result strengthens the hardness result of Cardinal and Steiner [Combin. Theory 2025], who proved that the problem cannot be approximated within a factor $(1 + \varepsilon)$ for some absolute constant $\varepsilon > 0$ unless $\P = \NP$. Finally, we rule out the existence of polynomial kernels parameterized by the number of vertices of the input hypergraph, a parameter for which the problem is easily seen to be \FPT.
\end{abstract}

\newpage

\section{Introduction}
\label{sec:intro}


Given a connected and undirected graph $G$, an \emph{elimination tree} $T$ of $G$ is any rooted tree that can be defined recursively as follows. If $V(G)=\{v\}$, then $T$ consists of a single root vertex $v$. Otherwise, a vertex $v \in V(G)$ is chosen as the root of $T$, and an elimination tree is created for each connected component of $G - v$. Each root of these elimination trees of $G - v$ is a child of $v$ in $T$. For a disconnected graph $G$, an \textit{elimination forest} of $G$ is the disjoint union of elimination trees of the connected components of $G$. Equivalently, an elimination forest of a graph $G$ is a rooted
forest $F$ (that is, a forest with a root in every connected component) on vertex set $V(G)$ such that for each edge $u v \in E(G)$, vertex $u$ is an ancestor
of vertex $v$ in $F$, or vice versa.

\autoref{fig:eliminationtree} illustrates an example of two elimination trees $T$ and $T'$ of a graph $G$. With slight (and standard) abuse of notation, we use the same labels for the vertices of a graph $G$ and any of its elimination trees. Note that an elimination tree is unordered, i.e., there is no ordering associated with the children of a vertex in the tree. Similarly, there is no ordering among the elimination trees in an elimination forest.


\begin{figure}[htb]
    \centering
    \vspace{-.15cm}
    \includegraphics[width=.75\textwidth]{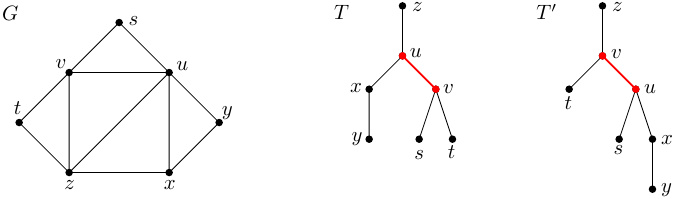}
    \caption{A graph $G$ and two of its elimination trees $T$ and $T'$, where the second one is obtained from the first one by the rotation of edge $uv$ (in red).
    \label{fig:eliminationtree}}
\end{figure}



Elimination trees have been studied extensively in various contexts, including graph theory, combinatorial optimization, polyhedral combinatorics, data structures, or VLSI design; see the recent paper by Cardinal, Merino, and M{\"u}tze~\cite{cardinal2022efficient} and the references therein. In particular, elimination trees play a prominent role in structural and algorithmic graph theory, as they appear naturally in several contexts. As a relevant example, the \textit{treedepth} of a graph $G$ is defined as the minimum height of an elimination forest of $G$~\cite{books/daglib/0030491}. 



Given a class of combinatorial objects and a ``local change'' operation between them, the corresponding \emph{flip graph} has as vertices the combinatorial objects, and its edges connect pairs of objects that differ by the prescribed change operation. In this article, we focus on the case where this class of combinatorial objects is the set of elimination forests of a graph~$G$. For these objects,  the  commonly considered  ``local change'' operation is that of \emph{edge rotation} defined as follows, where we suppose for simplicity that $G$ is connected.  Given an elimination tree $T$ of a graph $G$, the \emph{rotation} of an edge $uv \in E(T)$, with $u$ being the parent of $v$, creates another elimination of $G$, denoted by $\rot(T,uv)$, obtained, informally, by just swapping the choice of $u$ and $v$ in the recursive definition of $T$ (that is, in the so-called \emph{elimination ordering}), and updating the parent of the subtrees rooted at $v$ accordingly;  see \autoref{fig:rotation1} for an illustration. The formal definition can be found in \autoref{sec:prelim} (cf. \autoref{def:rotation}).





\begin{figure}[h!tb]
    \centering
    \vspace{-.15cm}
    \includegraphics[width=.77\textwidth]{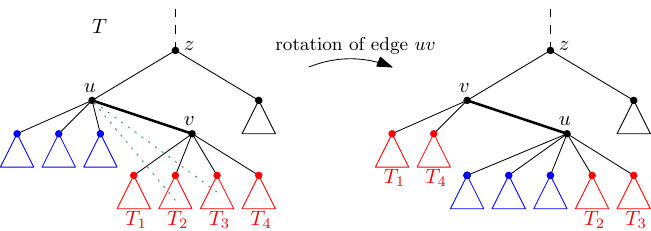}
    \caption{
    On the left:
    An elimination tree $T$ of a graph $G$ with adjacent vertices $u$ and $v$. Vertex~$v$ has four subtrees, and two of them, namely $T_2$ and $T_3$, contain vertices adjacent to vertex $u$ in~$G$.
    On the right:
    Elimination tree resulting from $T$ by applying the rotation of $uv$.
    Since both $G[V(T_2) \cup \{u\}]$ and $G[V(T_3) \cup \{u\}]$ are connected, $T_2$ and $T_3$ become subtrees of $u$ in $\rot(T,uv)$.
    \label{fig:rotation1}}
\end{figure}


For example, in \autoref{fig:eliminationtree}, $T'=\rot(T,uv)$. The definition of the rotation operation clearly implies that it is self-inverse with respect to any edge, that is, for any elimination tree $T$ of a graph $G$ and any edge $uv \in E(T)$, it holds that $T= \rot(\rot(T,uv),vu)$. The \emph{rotation distance} between two elimination trees $T,T'$ of a graph $G$, denoted by $\dist(T,T')$, is the minimum number of rotations it takes to transform $T$ into $T'$. The self-invertibility property of rotations discussed above implies that  $\dist(T,T') = \dist(T',T)$.




It is well known that for any graph $G$, the flip graph of elimination forests of $G$ under tree rotations is the skeleton of a polytope, referred to as the \emph{graph associahedron} ${\mathcal A}(G)$ and that was introduced by Carr, Devadoss, and Postnikov~\cite{Devadoss09,CarrD06,Postnikov09}. For the particular cases of $G$ being a complete graph, a cycle, a path,   a star, or a disjoint union of edges, ${\mathcal A}(G)$ is the permutahedron, the cyclohedron, the (standard) associahedron,  the stellohedron, or the hypercube, respectively; see the introduction of~\cite{cardinal2022efficient} for nice figures to illustrate these objects.

Graph associahedra naturally generalize
associahedra, which correspond to the particular case where $G$ is a path. As mentioned in~\cite{cardinal2022efficient},  the associahedron has a rich history and literature, connecting computer science, combinatorics, algebra, and topology~\cite{SleatorTT88,Loday04,HohlwegL07,Pournin14}. See the introduction of the paper by Ceballos, Santos,
and Ziegler~\cite{CeballosSZ15} for
a historical account. In an associahedron, each vertex corresponds to a binary tree over a set of $n$ elements, and each edge corresponds to a rotation operation between two binary trees, an operation used in standard online binary search tree algorithms~\cite{AVL-trees,GuibasS78,SleatorT85}. Binary trees are in bijection with many other Catalan objects such as triangulations of a convex polygon, well-formed parenthesis expressions, Dyck paths, etc.~\cite{Stanley15}. For instance, in triangulations of a convex polygon, the rotation operation maps to another simple operation, known as a flip, which removes the diagonal of a convex quadrilateral formed by two triangles and replaces it by the other diagonal.

\subparagraph*{Related work on graph associahedra.}  Distances on graph associahedra have been object of intensive study. Probably, the most studied parameter is the diameter, that is, the maximum distance between two vertices of ${\mathcal A}(G)$. A number of influential articles either determine the diameter exactly, or provide lower and upper bounds, or asymptotic estimates, for the cases where the underlying graph $G$ is a path~\cite{SleatorTT88,Pournin14}, a
star~\cite{MannevilleP10}, a cycle~\cite{Pournin17}, a tree~\cite{MannevilleP10,CardinalLP18}, a complete bipartite graph~\cite{CardinalPVP22},
a caterpillar~\cite{Berendsohn22}, a
trivially perfect graph~\cite{CardinalPVP22},
a graph in which some width parameter (such as treedepth or treewidth) is bounded~\cite{CardinalPVP22}, or a general graph~\cite{MannevilleP10}.

Our focus is on the algorithmic problem of determining the distance between two vertices of ${\mathcal A}(G)$, or equivalently, determining the rotation distance between two given elimination trees of a graph $G$. There are very few cases where this problem is known to be solvable in polynomial time, namely when $G$ is a complete graph (folklore), a star~\cite{CardinalPV24}, or a complete split graph~\cite{CardinalPV24}. The complexity of the case where $G$ is a path has been a notorious long-standing open problem for several decades, and it has been very recently settled by Dorfer~\cite{Dorfer26}, who has proved it to be \NP-complete. On the positive side, for $G$ being a path,  there exist a polynomial-time $2$-approximation algorithm~\cite{ClearyJ10} and several fixed-parameter tractable (\FPT)
algorithms when the distance is the parameter~\cite{cleary2009rotation,kanj2017computing,li20233,lubiw2015flip,Lucas10}.
It is worth mentioning that there are some hardness results on generalized settings~\cite{LubiwP15,Pilz14,AichholzerMP15} and polynomial-time algorithms for some type of restricted rotations~\cite{Cleary23}.

Cardinal et al.~\cite{cardinal4} asked whether computing distances on general graph associahedra is \NP-hard (before the hardness result of Dorfer~\cite{Dorfer26} was known). This question was answered positively by Ito et al.~\cite{ito_et_al:LIPIcs.ICALP.2023.82}. 

%
%

\subparagraph*{Our results.} The \NP-hardness result of Ito et al.~\cite{ito_et_al:LIPIcs.ICALP.2023.82} (see also~\cite{CardinalS25}) paves the way for studying the parameterized complexity of the problem of computing distances on graph associahedra. Thus, in this article we are interested in the following parameterized problem, where we consider the natural parameter, that is, the desired distance. 


\medskip
\defproblemaparam{\kelimination}
{A graph $G$, two elimination trees $T$ and $T'$ of $G$, and a positive integer $k$.}
{$k$.}
{Is the rotation distance between $T$ and $T'$ at most $k$?}


As mentioned above, \kelimination was known to be polynomial-time solvable on complete graphs, stars, and complete split graphs~\cite{CardinalPV24}, and \FPT algorithms were only known on paths~\cite{cleary2009rotation,kanj2017computing,li20233,lubiw2015flip,Lucas10}.
In this article we vastly generalize the known results by providing an \FPT algorithm to solve the \kelimination problem for a general input graph $G$. More precisely, we prove the following theorem.

\begin{restatable}{theorem}{maintheorem}\label{thm:main}
The  \kelimination problem can be solved in time $f(k) \cdot |V(G)|$, with
\vspace{-.25cm}
\begin{equation*}\label{eq:main-running-time}
    f(k) = k^{k \cdot 2^{2^{.^{.^{.{2^{\Ocal(k^2)}}}}}}}, \text{ where the tower of exponentials has height at most  $(3k+1)4k = \Ocal(k^2)$.}
\end{equation*}
\end{restatable}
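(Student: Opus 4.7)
My plan is to design a marking procedure that identifies a vertex set $M \subseteq V(G)$ whose size is bounded by a function $f(k)$, with the property that every rotation sequence of length at most $k$ transforming $T$ into $T'$ can be replaced by one of the same length that only rotates edges with both endpoints in $M$. Once such a set is in hand, the algorithm is a straightforward brute-force search: enumerate all rotation sequences of length at most $k$ acting on edges of the current elimination tree restricted to $M$, and check whether any of them reaches $T'$. Since $|M| \leq f(k)$, the number of candidate sequences is at most $|M|^k = f(k)^k$, and computing $M$ together with simulating each sequence can be done with only a linear-in-$|V(G)|$ factor after suitable preprocessing, yielding the overall running time $f(k) \cdot |V(G)|$ claimed in the theorem.

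The seed of $M$ is the ``disagreement set'' $D \subseteq V(G)$, consisting of vertices whose parent differs between $T$ and $T'$, together with vertices witnessing other local mismatches (such as different attached children). Since each rotation modifies only a constant number of parent pointers, a routine counting argument shows $|D| = \mathcal{O}(k)$ on any yes-instance. The set $D$ alone is insufficient, however, because an optimal sequence may perform rotations far away from $D$ in order to enable later rotations near $D$: rotations do not commute, so swapping the elimination order of two vertices may require first reorganizing an intermediate portion of the tree. I would therefore iteratively expand $M$ by closing it under structural neighborhoods, adding at each round (i) certain ancestors of current vertices in both $T$ and $T'$, (ii) $G$-neighbors of current vertices that could plausibly participate in a useful rotation, and (iii) specific vertices lying on tree-paths linking pairs of marked vertices. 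Each round can multiply $|M|$ by a factor depending on both the current size and on $k$, and after $\mathcal{O}(k^2)$ rounds this accumulates to the claimed tower-of-exponentials bound.

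The main obstacle, and the technical heart of the argument, is to prove \emph{correctness} of the marking, namely that any optimal sequence of length at most $k$ can be rerouted so as to stay inside $M$. I would establish this via an exchange lemma: if an optimal sequence $R_1, \ldots, R_\ell$ contains a rotation $R_j$ whose edge has an endpoint outside $M$, then either $R_j$ is redundant (its effect is undone by a later rotation, so both can be dropped, contradicting optimality), or $R_j$ can be swapped past neighboring rotations until it is replaced by a rotation with both endpoints in $M$. The delicacy is that rotations interact in non-trivial ways: swapping one rotation may force adjustments to several surrounding ones, and the closure rules defining $M$ must be chosen precisely so that any such adjustment remains inside $M$. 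Getting the closure rules to simultaneously (a) permit every needed exchange and (b) terminate within $\mathcal{O}(k^2)$ rounds is the most delicate part of the proof and is the source of the $\mathcal{O}(k^2)$ height in the final tower.

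Once the marking is shown correct, the algorithmic side is straightforward: compute $M$ in linear time by iterating the closure $\mathcal{O}(k^2)$ times using appropriate data structures (parent pointers, LCA queries, bucket-sorted adjacency lists); then enumerate all at most $|M|^k$ candidate rotation sequences on $M$, simulate each in time polynomial in $|M|$, and accept \kelimination if and only if some sequence reproduces $T'$. All factors depending only on $k$ are absorbed into $f(k)$, yielding the advertised $f(k) \cdot |V(G)|$ running time.
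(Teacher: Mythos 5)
Your overall architecture (bounded marked set $M$, an exchange lemma rerouting any length-$\le k$ sequence into $M$, then brute force over $|M|^{\Ocal(k)}$ sequences) matches the paper's, but two of your concrete steps fail. First, the seed set: you claim that the set $D$ of vertices whose \emph{parent} differs between $T$ and $T'$ has size $\Ocal(k)$ because ``each rotation modifies only a constant number of parent pointers.'' This is false. A single rotation of an edge $uv$ reattaches, as children of $u$, every subtree of $v$ containing a $G$-neighbour of $u$, so it can change the parent of arbitrarily many vertices; only the number of vertices whose \emph{children set} changes is bounded (by $3$ per rotation, hence $3k$ overall). The correct seed is the set of children-bad vertices, and the parent-bad vertices must be handled differently (in the paper, via the $\wantparent$ component of the type and the fact that a parent-bad vertex with unchanged children forces its parent to be children-bad).

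Second, and more fundamentally, your closure rules do not confront the real obstacle: unbounded degree. After restricting to balls of radius $\Ocal(k)$ around children-bad vertices one gets components of diameter $\Ocal(k^2)$, but a single vertex there may have $\Omega(n)$ children, each a plausible participant in a rotation. Closing $M$ under ``$G$-neighbours that could plausibly participate'' therefore blows up to $\Omega(n)$ in one round, and no bound of the form $f(k)$ follows from iterating a neighbourhood closure $\Ocal(k^2)$ times. What is needed is an equivalence relation on siblings coarse enough that only boundedly many classes exist yet fine enough that a used subtree can always be swapped for an unused sibling subtree in the same class. The paper achieves this with a recursively defined \emph{type} (adjacency trace towards ancestors inside the component, the desired parent in $T'$, and the multiset of children's types truncated at $k+1$, the truncation being sound because at most $k$ sibling subtrees can be touched by $k$ rotations); the tower of height $\Ocal(k^2)$ comes from this recursion over the component's depth, not from rounds of closure. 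Relatedly, your exchange lemma as stated (swap one offending rotation past its neighbours) is too local: when an entire unmarked subtree is used, all of its used vertices must be remapped simultaneously to vertices of matching type in a marked sibling subtree, which is a global substitution rather than a sequence of local swaps. Without the sibling-equivalence idea and the accompanying global replacement, the correctness of the marking cannot be established.
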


In particular, \autoref{thm:main} yields a linear-time algorithm to solve \kelimination for every fixed value of the distance $k$. To the best of our knowledge, this is the first positive algorithmic result for the general \kelimination problem (i.e., with no restriction on the input graph $G$), and we hope that it will find algorithmic applications in the many contexts where graph associahedra arise naturally~\cite{Devadoss09,CarrD06,Postnikov09,cardinal2022efficient,MannevilleP10,ito_et_al:LIPIcs.ICALP.2023.82}.
Our result can also been seen through the lens of the very active area of the parameterized complexity of graph reconfiguration problems; see~\cite{BousquetMNS24} for a recent survey.


\medskip

It is natural to ask whether the \FPT algorithm of \autoref{thm:main} can be generalized beyond graph associahedra. In the second part of the article we prove that, for a natural generalization of graph associahedra called \emph{hypergraphic polytopes} (defined in \autoref{sec:hypergraphic}, see~\cite{CardinalS25,BBM19,Hel74,VW93,AA23,PRW08,Post09,Reh22} for more background), this is probably not the case, since we prove (cf.~\autoref{thm:w2hard}) that computing distances on hypergraphic polytopes is ${\sf W}[2]$-hard parameterized by the distance. We prove this by a parameterized reduction from \textsc{Dominating Set} parameterized by the size of the solution.

By appropriately modifying the ${\sf W}[2]$-hardness reduction, we also transfer the hardness of approximation from  \textsc{Dominating Set} to computing distances on hypergraphic polytopes. Namely, as it is the case of \textsc{Dominating Set}~\cite{RazS97}, we prove (cf.~\autoref{thm:inapproximability}) that
there exists some constant $c > 0$ such that computing distances on hypergraphic polytopes does not admit a polynomial-time $c \cdot \log (|V|+|{\mathcal E}|)$-approximation algorithm unless $\P = \NP$, where $H=(V,{\mathcal E})$ is the input hypergraph. This result strengthens the recent hardness result of Cardinal and Steiner~\cite{CardinalS25}, who proved that the problem cannot be approximated within a factor $(1 + \varepsilon)$ for some absolute constant $\varepsilon > 0$ unless $\P = \NP$.

Finally, another appropriate modification of the ${\sf W}[2]$-hardness reduction, namely by reducing from the \textsc{Red-Blue Dominating Set} problem instead, allows us (cf.~\autoref{thm:no-kernels}) to rule out the existence of polynomial kernels for computing distances on hypergraphic polytopes parameterized by the number of vertices of the input hypergraph, unless ${\sf NP} \subseteq {\sf coNP}/{\sf poly}$. As we discuss in \autoref{sec:no-kernels}, the problem is easily seen to be \FPT with this parameterization.

\subparagraph*{Organization.} In \autoref{sec:prelim} we provide standard preliminaries about graphs and parameterized complexity and fix our notation. In \autoref{sec:sketch} we present an overview of the main ideas of the algorithm of \autoref{thm:main}, which may serve as a road map to read the formal description of the \FPT algorithm presented in \autoref{sec:description-algorithm}, split into several subsections. We prove the hardness results in \autoref{sec:hardness}, and in \autoref{sec:discussion} we discuss several directions for further research, both on graph associahedra and hypergraphic polytopes.


\section{Preliminaries}\label{sec:prelim}

\subparagraph*{Graphs.} We use standard graph-theoretic notation, and we refer the reader to~\cite{Diestel16} for any undefined terms. An edge between two vertices $u,v$ of a graph $G$ is denoted by $uv$. For a graph $G$ and a vertex set $S \subseteq V(G)$, the graph $G[S]$ has vertex set $S$ and edge set $\{uv \mid u,v \in S \text{ and }uv \in E(G)\}$. A \emph{connected component} $Z$ of a graph $G$ is a connected subgraph that is maximal (with respect to the addition of vertices or edges) with this property. We let ${\sf cc}(G)$ denote the set of connected components of a graph $G$. The \emph{distance} between two vertices $x,y$ in $G$, denoted by $\dist_G(x,y)$, is the length of a shortest path between $x$ and $y$ in $G$. The \emph{diameter} of $G$, denoted by $\diam(G)$, is the maximum length of a shortest path between any two vertices of $G$. We will often consider distances and the diameter of some rooted tree $T$ that is (a subtree of) an elimination tree of a graph $G$. We stress that $\dist_T(x,y)$ refers to the distance between $x$ and $y$ in $T$, not in $G$, and the same applies to $\diam(T)$.

For a graph $G$, a vertex $v \in V(G)$, and an integer $r \geq 1$, we denote by $N_G^{r}[v]$ the set of vertices within distance at most $r$ from $v$ in $G$, including $v$ itself. For a set $S \subseteq V(G)$, we let $N_G^{r}[S] = \bigcup_{v \in S}N_G^{r}[v]$. For a subgraph $H$ of $G$, we use $N_G^{r}(H)$ as a shortcut for $N_G^{r}(V(H))$. In all these notations, we omit the superscript $r$ in the case where $r=1$, that is, to refer to the usual neighborhood.

For a positive integer $p$, we let $[p]$ denote the set $\{1, 2, \ldots, p\}$. If $f:A \to B$ is a function between two sets $A$ and $B$ and $A' \subseteq A$, we denote by $f|_{A'}$ the restriction of $f$ to $A'$.

\subparagraph*{Rooted trees.} For a rooted tree $T$, we use $\root(T)$ to denote its root. For a vertex $v \in V(T)$, we denote by $\parent(T,v)$ the unique parent of $v$ in $T$ (or the empty set if $v$ is the root),  by $\child(T,v)$ the set of children of $v$ in $T$, by $\anc(T,v)$ the set of ancestors of $v$ in $T$ (including $v$ itself), and by $\descendants(T,v)$ the set of descendants of $v$ in $T$ (including $v$ itself). The \emph{strict} ancestors (resp. descendants) of $v$ are the vertices in the set $\anc(T,v) \setminus \{v\}$ (resp. $\descendants(T,v) \setminus \{v\}$). We denote by $T(v)$ the subtree of $T$ rooted at $v$. Two vertices $v,v' \in V(T)$ are $T$-\emph{siblings} if $\parent(T,v) = \parent(T,v')$.

\subparagraph*{Rotation of an edge in an elimination tree.} We provide the formal definition of the rotation operation, which has been already informally defined in the introduction (cf. \autoref{fig:rotation1}).

\begin{definition}[rotation operation]\label{def:rotation}
Let $T$ be an elimination tree of a graph $G$ and let $uv \in E(T)$ with $\parent(T,v) = u$. The {\em rotation of $uv$} in $T$ creates another elimination tree $\rot(T,uv)$ of $G$ defined as follows, where for better readability we use $T' = \rot(T,uv)$:
\begin{enumerate}
    \item $\parent(T',u)=v$.
    \item $u \in \child(T',v)$.
    \item If $u \neq \root(T)$, let $z = \parent(T,u)$. Then $\child(T',z) =(\child(T,z) \setminus\{u\}) \cup \{v\}$.
    \item $\child(T,u) \subseteq \child(T',u)$.
    \item Let $w \in \child(T,v)$. If $u$ is adjacent in $G$ to some vertex in $T(w)$, then $w \in \child(T',u)$; otherwise $w \in \child(T',v)$.
    \item\label{item6} For every vertex $s \in V(G) \setminus \{u,v,z\}$, $\child(T',s)=\child(T,s)$.
\end{enumerate}
\end{definition}

A \textit{$k$-rotation sequence} from an elimination tree $T$ to another elimination tree $T'$ (of the same graph $G$) is an ordered set $(e_1,\ldots,e_k)$ of edges such that, letting inductively $T_0 := T$ and, for $i \in [k]$, $T_i:= \rot(T_{i-1},e_i)$ with $e_i \in E(T_{i-1})$, we have that $T_k=T'$. In other words, a $k$-rotation sequence consists of the ordered list of the $k$ edges to be rotated in order to obtain $T'$ from $T$, going through the intermediate elimination trees $T_1, \ldots, T_{k-1}$ (of the same graph $G$). Clearly, $\dist(T,T') \leq k$ if and only if there exists an $\ell$-rotation sequence from $T$ to $T'$ for some $\ell \leq k$. We say that a vertex $v \in V(T)$ is \emph{used} by a rotation sequence $\sigma$ if it is an endpoint of some of the edges that are rotated by $\sigma$.

\subparagraph*{Parameterized complexity.}  A \emph{parameterized problem} is a language $L \subseteq \Sigma^* \times \mathbb{N}$, for some finite alphabet $\Sigma$.  For an instance $(x,k) \in \Sigma^* \times \mathbb{N}$, the value~$k$ is called the \emph{parameter}. Such a problem is \emph{fixed-parameter tractable} (\FPT for short) if there is an algorithm that decides membership of an instance $(x, k)$ in time $f(k)\cdot {|x|}^{\Ocal(1)}$ for some computable function~$f$.

For a computable function~$g \colon \mathbb{N} \to \mathbb{N}$, a \emph{kernelization algorithm} (or simply a \emph{kernel}) for a parameterized problem $L$ of \emph{size} $g$ is an algorithm $A$ that given any instance $(x,k)$ of $L$, runs in polynomial time and returns an instance $(x',k')$ such that $(x,k) \in L \Leftrightarrow (x', k') \in L$ with $|x'|$, $k' \le g(k)$. The function $g(k)$ is called the \textit{size} of the kernel, and a kernel is \textit{polynomial} (resp. \textit{linear, quadratic}) if $g(k)$ is a polynomial (resp. linear, quadratic) function.

To transfer the non-existence of polynomial kernels (under reasonable complexity assumptions), we use the notion of \emph{polynomial parameter transformation} (PPT for short), introduced by Bodlaender, Thomass{\'{e}}, and Yeo~\cite{BodlaenderTY11}.  A polynomial parameter transformation from a parameterized problem $P$ to a parameterized problem $Q$ is an algorithm that, given an instance $(x,k)$ of $P$, computes in polynomial time an equivalent instance $(x',k')$ of $Q$ such that $k'$ is bounded by a polynomial depending only on $k$. It follows easily from the definition that if $P$ does not admit a polynomial kernel under some complexity assumption, then the same holds for $Q$.

Within parameterized problems, the classes ${\sf W}[i]$ with $i\geq 1$ may be seen as the parameterized equivalent to the class {\sf NP} of classical decision problems. Without entering into details (see~\cite{DoFe13,CyganFKLMPPS15} for the formal definitions), a parameterized problem being {\sf W}[i]-\emph{hard} for some $i\geq 1$ can be seen as a strong evidence that this problem is {\sl not} {\sf FPT}.
The canonical example of ${\sf W}[1]$-hard (resp. ${\sf W}[2]$-hard) problem is \textsc{Clique} (resp. \textsc{Dominating Set}), both  parameterized by the size of the solution

Consult~\cite{CyganFKLMPPS15,Niedermeier06,FlumG06,DoFe13,FominLSZ19} for background on parameterized complexity.

\section{Overview of the main ideas of the algorithm}
\label{sec:sketch}

Our approach to obtain an \FPT algorithm to solve \kelimination is novel, and does not build on previous work. Given two elimination trees $T$ and $T'$ of a connected graph $G$ and a positive integer $k$, our goal is to decide whether there exists what we call an   \textit{$\ell$-rotation sequence} $\sigma$ from $T$ to $T'$, for some $\ell \leq k$, that is, an ordered list of $\ell$ edges to be rotated in order to obtain $T'$ from $T$, going through the intermediate elimination trees $T_1, \ldots, T_{\ell-1}$ (all of the same graph $G$); see \autoref{sec:prelim} for the formal definition. At a high level, our approach is based on identifying a subset of \emph{marked vertices} $M \subseteq V(T)$, of size bounded by a function of $k$, so that we can assume that the desired rotation sequence $\sigma$ uses only vertices in $M$. Once this is proved, an \FPT algorithm follows directly by applying brute force and guessing all possible rotations using vertices in $M$.

A crucial observation (cf.~\autoref{obs:ChangingChildren}) is that a rotation may change the set of children of at most three vertices (but the parent of arbitrarily many vertices, such as the roots of $T_2$ and $T_3$ in \autoref{fig:rotation1}). Motivated by this, we say that a vertex $v \in V(T)$ is \emph{$(T,T')$-children-bad} if its set of children in $T$ is different from its set of children in $T'$. By \autoref{obs:ChangingChildren}, we may assume (cf.~\autoref{prop:+3kno}) that we are dealing with an instance in which the number of $(T,T')$-children-bad vertices is at most $3k$.

In a first step, we prove (cf.~\autoref{lemma:restriction-to-balls}) that we can assume that the desired sequence $\sigma$ of at most $k$ rotations to transform $T$ into $T'$ uses only vertices lying in the union of the balls of radius $2k$ around $(T,T')$-children-bad vertices of $T$, which we denote by $\Bcb$. The proof of \autoref{lemma:restriction-to-balls} exploits, in particular, the fact that a rotation may increase or decrease vertex distances (in the corresponding trees) by at most one (cf.~\autoref{eq:distance-changes-by-one}). This is then used to show that if a rotation uses some vertex outside of  $\Bcb$, then it can be ``simplified'' into another one that does not (cf. \autoref{fig:bounded-diam}).

By \autoref{lemma:restriction-to-balls}, we restrict henceforth to rotations using only vertices in $\Bcb$. We can consider each connected component $Z$ of $T[\Bcb]$, since it can be easily seen that we can assume that there are at most $k$ of them. By definition of $\Bcb$, the diameter of such a component $Z$ is $\Ocal(k^2)$ (cf. \autoref{eq:bounded-diameter}). Thus, the ``only'' obstacle to obtain the desired \FPT algorithm is that the vertices in $\Bcb$ can have an arbitrarily large degree. Note that in the particular case where the underlying graph $G$ has bounded degree, the maximum degree of any elimination tree of $G$ is bounded, and therefore in that case $|\Bcb|$ is bounded by a function of $k$, and  an \FPT algorithm follows immediately. To the best of our knowledge, this result was not known for graphs other than paths (albeit, with a better running time than the one that results from just brute-forcing on the set $\Bcb$, which is of the form $2^{2^{\Ocal(k)}} \cdot |V(G)|$).

Our strategy to deal with high-degree vertices in $\Bcb$ is as follows. Fix one connected component $Z$ of $T[\Bcb]$. Our goal is to identify a subset $M_Z \subseteq V(Z)$ of size bounded by a function of $k$, such that we can restrict our search to rotations using only vertices in $M_Z$. To find such a ``small'' set $M_Z \subseteq V(Z)$, we define the notion of \textit{type} of a vertex $v \in V(Z)$, in such a way that the number of different types is bounded by a function of $k$.  Then, we will prove via our marking algorithm that it is enough to keep in $M_Z$, for each type, a number of vertices bounded again by a function of $k$. 

Before defining the types, we need to define the \textit{trace} of a vertex $v$ in $Z$. To get some intuition, look at the rotation depicted in \autoref{fig:rotation1}. Note that, for each of the subtrees $T_1,\ldots,T_4$ that are children of $v$ in $T$, what determines whether they are children of $u$ or $v$ in the resulting subtree is whether some vertex in $T_i$ is adjacent to $u$ or not. Iterating this idea, if we are about to perform at most $k$ rotations starting from $T$, then the behavior of such a subtree $T_i$, assuming that no vertex of it is used by a rotation, is determined by its neighborhood in a set of ancestors of size at most the diameter of $Z$, and this is what the trace is intended to represent. That is, the trace of a vertex $v$ in $Z$, denoted by $\trace(T,Z,v)$, captures ``abstractly'' the neighborhood of the whole subtree rooted at $v$ among (the ordered set of) its ancestors within the designated vertex set $Z \subseteq V(T)$; see \autoref{def:trace} for the formal definition of trace and \autoref{fig:example-trace} for an example. We stress that, when considering the neighborhood in the set of ancestors,  we look at the whole subtree $T(v)$ rooted at $v$, and not only at its restriction to the set $Z$.

Equipped with the definition of trace, we can define the notion of type, which is somehow involved (cf. \autoref{def:type}) and whose intuition behind is the following. For our marking algorithm to make sense, we want that if two vertices $v,v'$ with the same parent (called $T$-siblings) have the same type (within $Z$), denoted by $\tau(T,Z,v)=\tau(T,Z,v')$, and an $\ell$-rotation sequence $\sigma$ from $T$ to $T'$ uses some vertex from $T(v)$  but uses no vertex in $T(v')$, then there exists another $\ell$-rotation sequence $\sigma'$ from $T$ to $T'$ that uses vertices in $T(v')$ instead of those in $T(v)$. To guarantee this replacement property, we need a stronger condition than just $v$ and $v'$ having the same trace. Informally, we need them to have the same ``variety of traces among their children within $Z$''. More formally, this leads to a recursive definition where, in the leaves of $Z$ (that are not necessarily leaves of $T$), the type corresponds to the trace, and for non-leaves, the type is defined by the trace and by the number of children of each possible lower type. Note that, a priori, the number of children of a given type may be unbounded, which would rule out the objective of bounding the number of types as a function of $k$. To overcome this obstacle, the crucial and simple observation is that at most $k$ subtrees rooted at a vertex of $T$ contain vertices used by the desired rotation sequence $\sigma$ (cf. \autoref{lem:few-siblings-involved}). This implies that if there are at least $k+1$ $T$-siblings of the same type, necessarily the whole subtree of at least one of them, say $u$, will not be used by $\sigma$, implying that $u$ (and its whole subtree) achieves the desired parent in $T'$ without being used by $\sigma$, and the same occurs to any other $T$-sibling of the same type. Thus, keeping track of the existence of at least $k+1$ such children (regardless of their actual number) is enough to capture this
``static'' behavior, and allows us to shrink the possible distinct numbers to keep track of to a function of $k$ (cf.~\autoref{eq:definition-type}, where the ``$\min$'' is justified by the previous discussion). Finally, for technical reasons we also incorporate into the type of a vertex its desired parent in $T'$, in case it defers from its parent in $T'$ (cf. function $\wantparent(T,T',\cdot)$). See \autoref{def:type} for the formal definition of type and \autoref{fig:example-types} for an example with $k=2$.

We prove (cf.~\autoref{lem:number-types}) that the number of types is indeed bounded by a (large) function $g(k)$ depending only on $k$, and this function is what yields the upper bound on the asymptotic running time of the \FPT algorithm of \autoref{thm:main}. Moreover, we show (cf. \autoref{obs:computation-types}) that the type of a vertex can be computed in time $g(k) \cdot |V(G)|$. We then use the notion of type and the bound given by \autoref{lem:number-types} to define the desired set
$M_Z \subseteq Z$ of size bounded by a function of $k$.
In order to find $M_Z$, we apply a marking algorithm on $Z$, that first identifies a set $M_Z^{\sf pre} \subseteq V(Z)$ of \textit{pre-marked} vertices, whose size is not necessarily bounded by a function of $k$, and then ``prunes'' this set $M_Z^{\sf pre}$ in a root-to-leaf fashion to find the desired  set of \textit{marked} vertices $M_Z \subseteq M_Z^{\sf pre}$ of appropriate size. See \autoref{fig:example-marking} for an example of the marking algorithm for $k=1$. We define $M = \cup_{Z \in {\sf cc}(T[\Bcb])}M_Z$ (where ${\sf cc}$ denotes the set of connected components), and we call it the set of \emph{marked vertices} of $T$. We prove (cf.~\autoref{lem:M-bounded}) that the size of $M$ is roughly equal to the number of types, and that the set $M$ can be computed in time \FPT.

Once we have our set of marked vertices $M$ at hand, it remains to prove that we can restrict the rotations to use only vertices in $M$. This is proved in our main technical result (cf.~\autoref{lem:main}), whose proof critically exploits the recursive definition of the types. In a nutshell, we consider an $\ell$-rotation sequence $\sigma$ from $T$ to $T'$, for some $\ell \leq k$, minimizing, among all $\ell$-rotation sequences from $T$ to $T'$, the number of used vertices in $V(T) \setminus M$. Our goal is to define another $\ell$-rotation sequence $\sigma'$ from $T$ to $T'$ using strictly less vertices in $V(T) \setminus M$ than $\sigma$, contradicting the choice of $\sigma$.  To this end, let $v \in V(T) \setminus M$ be a furthest (with respect to the distance to $\root(T)$) non-marked vertex of $T$ that is used by $\sigma$. We distinguish two cases.

In Case~1, we assume that $v$ has a marked $T$-sibling $v'$ with $\tau(T,Z,v)=\tau(T,Z,v')$ (cf. \autoref{fig:mainlemma-case1}). It is not difficult to prove that we can define $\sigma'$ from $\sigma$ by just replacing $v$ with $v'$ in all the rotations of $\sigma$ involving $v$ (cf. \autoref{claim:case1-well-defined} and
\autoref{claim:case1-good-rotation-sequence}).

In Case~2, all $T$-siblings $v'$ of $v$ with $\tau(T,Z,v)=\tau(T,Z,v')$, if any, are non-marked.
In this case, in order to define another $\ell$-rotation sequence $\sigma'$ from $T$ to $T'$ that uses more marked vertices than $\sigma$, we need to modify $\sigma$ in a more {\sl global} way than in Case~1. Namely, in order to define $\sigma'$, we need a more global (and involved) replacement, which we achieve via what we call a \emph{representative function} $\rho$. To define $\rho$, we first guarantee the existence of a very helpful vertex $v^{\star}$ that is a non-marked ancestor of $v$ having a marked $T$-sibling $v'$ of the same type such that no vertex in $T(v')$ is used by $\sigma$; see \autoref{claim:vertex-vstar-exists} and \autoref{fig:mainlemma-case2}. Exploiting the recursive definition of type, we then define our representative function $\rho$, mapping vertices used by $\sigma$ in $T(v^{\star})$ to vertices in $T(v')$ of the same type (cf.~\autoref{claim:representative-function}), and prove that we can define $\sigma'$ from $\sigma$ by replacing each vertex $v$ used by $\sigma$ in $T(v^{\star})$ by its image via $\rho$ in $T(v')$ in all the rotations of $\sigma$ involving $v$ (cf. \autoref{claim:case2-well-defined} and
\autoref{claim:case2-good-rotation-sequence}).

\section{Formal description of the \FPT algorithm}
\label{sec:description-algorithm}

In this section we present our \FPT algorithm to solve the \kelimination problem. We start in \autoref{sec:basic-observations} by providing some definitions and  useful  observations about the so-called \emph{good} and \emph{bad} vertices. In \autoref{sec:bounded-diameter} we show that we can assume that all the rotations involve vertices within balls of small radius around bad vertices. In \autoref{sec:description-marking} we describe our marking algorithm, using the definition of type, and show that the set of marked vertices can be computed in \FPT time. In \autoref{sec:restriction-to-marked} we prove our main technical result (\autoref{lem:main}), stating that we can restrict the desired rotations to involve only marked vertices.
Finally, in \autoref{sec:wrapping-up} we wrap up the previous results to prove \autoref{thm:main}.

\subsection{Good and bad vertices}\label{sec:basic-observations}

Throughout the paper, we assume that all the considered elimination trees are of a same fixed graph $G$.
For simplicity, we may assume henceforth that the considered input graph $G$ is connected.

Our algorithm exploits how a rotation in an elimination tree $T$ may affect the parents and the children of its vertices. Note that a single rotation of an edge $uv \in E(T)$, yielding an elimination tree $T'$, may change the parent of arbitrarily many vertices. Indeed, these vertices are the roots of the red subtrees in \autoref{fig:rotation1}, and the considered vertex $v$ may be adjacent to the root of arbitrarily many subtrees containing at least one vertex adjacent to $u$: for each such root $r$, $\parent(T,r) = v$ but $\parent(T',r) = u$. As a concrete example, in \autoref{fig:eliminationtree}, $\parent(T,s) = v$ but $\parent(T',s) = u$. On the other hand,
item~6 of \autoref{def:rotation} implies that there are at most three vertices whose children set changes from $T$ to $T'$, namely $u,v,z$ as depicted in \autoref{fig:rotation1}. (Note that the sets of children of $u$ and $v$ always change, and that of $z$ changes provided that this vertex exists.) We state this observation formally, since it will be extensively used afterwards.



\begin{observation}\label{obs:ChangingChildren}
One rotation may change the set of children of at most three vertices.
\end{observation}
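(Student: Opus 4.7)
The plan is simply to read off Definition~\ref{def:rotation} and invoke its item~6. Fix an edge $uv \in E(T)$ with $u = \parent(T,v)$, set $T' = \rot(T,uv)$, and, if $u \neq \root(T)$, let $z = \parent(T,u)$. The candidate vertices whose children set can possibly differ between $T$ and $T'$ are exactly $u$, $v$, and (when it exists) $z$. Indeed, items~2 and~4--5 of Definition~\ref{def:rotation} prescribe $\child(T',v)$ and $\child(T',u)$, item~3 prescribes $\child(T',z) = (\child(T,z) \setminus \{u\}) \cup \{v\}$, and item~6 explicitly states that $\child(T',s) = \child(T,s)$ for every $s \in V(G) \setminus \{u,v,z\}$.

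So the proof is a one-line appeal to item~6: every vertex outside the set $\{u,v,z\}$ has the same children in $T$ and in $T'$, and this set has size at most three (exactly two when $u = \root(T)$). No additional argument is required, and there is no technical obstacle; the observation is really a bookkeeping consequence built into the definition of the rotation, with the sole role of being the starting point for bounding, in \autoref{prop:+3kno}, the number of $(T,T')$-children-bad vertices by $3k$ whenever $\dist(T,T') \leq k$.
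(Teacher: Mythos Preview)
Your proof is correct and matches the paper's approach exactly: the paper likewise derives the observation directly from item~6 of \autoref{def:rotation}, identifying $u$, $v$, and (when it exists) $z$ as the only vertices whose children set can change.
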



The above discussion motivates the following definition.

\begin{definition}[bad vertices]\label{def:bad}
Given two elimination trees $T$ and $T'$,
a vertex $v \in V(T)$ is \emph{$(T,T')$-children-bad} (resp. (T,T')-\emph{parent-bad}) if $\child(T,v) \neq \child(T',v)$ (resp. $\parent(T,v) \neq \parent(T',v)$). A vertex $v \in V(T)$ is \emph{$(T,T')$-bad} if it is $(T,T')$-children-bad, or $(T,T')$-parent-bad, or both. A vertex $v \in V(T)$ is \emph{$(T,T')$-good} if it is not $(T,T')$-bad.
\end{definition}

Note that $T$ contains no $(T,T')$-children-bad (or $(T,T')$-parent-bad, or just $(T,T')$-bad) vertices if and only if $T=T'$, that is, if and only if $\dist(T,T')=0$. Also, note that a vertex $v \in V(T)$ is $(T,T')$-children-bad, with $\child(T,v)\neq \emptyset$, if and only if at least one of its children is $(T,T')$-parent-bad. \autoref{obs:ChangingChildren} directly implies the following necessary condition for the existence of a solution.

\begin{observation}\label{prop:+3kno}
Given two elimination trees $T$ and $T'$, if $\dist(T,T') \leq k$ then the number of $(T,T')$-children-bad vertices is at most $3k$.
\end{observation}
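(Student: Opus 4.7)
The plan is to derive the bound directly from \autoref{obs:ChangingChildren} by tracking, rotation by rotation, the set of vertices whose children list has been altered along the sequence. Since $\dist(T,T') \leq k$, there exists an $\ell$-rotation sequence from $T$ to $T'$ for some $\ell \leq k$, producing intermediate elimination trees $T = T_0, T_1, \ldots, T_\ell = T'$.

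First, I would define, for each $i \in [\ell]$, the set $C_i \subseteq V(G)$ of vertices $w$ such that $\child(T_{i-1},w) \neq \child(T_i,w)$. By \autoref{obs:ChangingChildren} applied to the single rotation taking $T_{i-1}$ into $T_i$, we have $|C_i| \leq 3$. Consider then the set $C = \bigcup_{i=1}^{\ell} C_i$, which satisfies $|C| \leq 3\ell \leq 3k$.

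The key observation is that any vertex $v \notin C$ has $\child(T_{i-1},v) = \child(T_i,v)$ for every $i \in [\ell]$, so by transitivity $\child(T,v) = \child(T_0,v) = \child(T_\ell,v) = \child(T',v)$, meaning $v$ is not $(T,T')$-children-bad. Contrapositively, every $(T,T')$-children-bad vertex lies in $C$, so the number of such vertices is at most $|C| \leq 3k$, which is exactly the desired bound.

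There is no real obstacle here: the argument is just a bookkeeping consequence of \autoref{obs:ChangingChildren}, turning the per-rotation bound of~$3$ into the global bound of~$3k$ via a union over the (at most $k$) rotations in the sequence.
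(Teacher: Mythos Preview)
Your argument is correct and is exactly the intended one: the paper simply states that \autoref{prop:+3kno} follows directly from \autoref{obs:ChangingChildren}, and your write-up spells out that straightforward union-bound over the at most $k$ rotations.
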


\autoref{prop:+3kno} is equivalent to saying that we can safely conclude that any instance $(G,T,T',k)$ of \kelimination with at least $3k+1$ $(T,T')$-children-bad vertices is a {\sf no}-instance. Thus, we can assume henceforth that we are dealing with an instance of \kelimination containing at most $3k$ $(T,T')$-children-bad vertices.



\subsection{Restricting the rotations to small balls around bad vertices}
\label{sec:bounded-diameter}

Our next goal is to prove (\autoref{lemma:restriction-to-balls}) that we can assume that the desired sequence of at most $k$ rotations to transform $T$ into $T'$ uses only edges whose both endvertices lie in the union of all the balls of appropriate radius (depending only on $k$) around $(T,T')$-children-bad vertices of $T$, whose number is bounded by a function of $k$ by \autoref{prop:+3kno}.




In the next definition, for the sake of notational simplicity we omit $T,T'$, and $k$ from the notation $\Bcb$, as we assume that they are already given, and fixed, as the input of our problem. We include $\root(T)$ in the considered set for technical reasons, namely in the proof of \autoref{claim:no-bad}.

\begin{definition}[union of balls of children-bad vertices]\label{def:ball-around-children-bad}
Let $C \subseteq V(T)$ be the set of $(T,T')$-children-bad vertices.
We define $\Bcb = N_T^{2k}[C \cup \root(T)]$.
\end{definition}




\begin{lemma}\label{lemma:restriction-to-balls}
If $\dist(T,T')\leq k$, then there exists an $\ell$-rotation sequence from $T$ to $T'$, with $\ell \leq k$, using only vertices in $\Bcb$.
\end{lemma}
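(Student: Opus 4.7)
I would prove the lemma by an exchange argument on rotation sequences. Let $\sigma=(e_1,\ldots,e_\ell)$ be a rotation sequence from $T$ to $T'$ with $\ell \leq k$, and among all such sequences choose one minimizing the number of rotations that use a vertex of $V(T)\setminus \Bcb$. Suppose, aiming for a contradiction, that this count is strictly positive, and let $e_i=(u,v)$ be such a rotation with, say, $v \notin \Bcb$, that is, $\dist_T(v,x)>2k$ for every $x \in C \cup \{\root(T)\}$.

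The first step is a stability claim: $v$ stays far from every bad vertex throughout the entire sequence. By \autoref{eq:distance-changes-by-one}, each rotation changes any pairwise tree-distance by at most one, so a straightforward induction yields $\dist_{T_j}(v,x) \geq \dist_T(v,x)-j > 2k-j \geq k$ for every $j \in \{0,\ldots,\ell\}$ and every $x \in C \cup \{\root(T)\}$. In particular $v \notin C$, so $\child(T,v)=\child(T',v)$; and if $v$ were $(T,T')$-parent-bad, then $\parent(T,v)$ would lie in $C$, forcing $v \in N_T[C] \subseteq \Bcb$, a contradiction. Hence $\parent(T,v)=\parent(T',v)$ as well, and $v$ occupies the same local position in $T$ and in $T'$.

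Next, I would use this stability to show that $e_i$ can be ``simplified'' into a rotation with both endpoints in $\Bcb$, in the spirit of \autoref{fig:bounded-diam}. By \autoref{obs:ChangingChildren}, $e_i$ changes the children of only three vertices $u$, $v$, and $z=\parent(T_{i-1},u)$; each of $u$ and $z$ sits at tree-distance at most two from $v$ in $T_{i-1}$, so the stability bound puts each of $u,v,z$ at positive distance from every vertex of $C$ in $T_{i-1}$, and therefore outside $C$. Since $v$ must recover its original parent and children by the end of $\sigma$, some later rotation of $\sigma$ must undo the $v$-perturbation introduced by $e_i$; the stability claim guarantees that no vertex of $C$ enters $v$'s tree-neighborhood between these two rotations, so the subtrees affected by this pair are not disturbed in the meantime. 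One can then either delete the pair outright or replace it by at most one rotation whose endpoints both lie in $\Bcb$, obtaining a new rotation sequence from $T$ to $T'$ of length at most $\ell \leq k$ with strictly fewer rotations using vertices outside $\Bcb$, contradicting the choice of $\sigma$.

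The main obstacle is making this commutation/replacement step rigorous: after removing or substituting $e_i$ and its undoing partner, one must verify via \autoref{def:rotation} that each remaining rotation is still applicable to its current tree and that the sequence still ends at $T'$. The stability claim is the crucial tool here: every ``active'' vertex of the intermediate rotations sits far from $v$, so the local perturbation at $v$ does not affect the $G$-adjacencies nor the subtree structures that those rotations depend on, and the effect of the pair on $v$'s subtree commutes with the deletion of $(e_i, \cdot)$. This bookkeeping — made plausible by \autoref{fig:bounded-diam} — is where I expect the technical work to concentrate.
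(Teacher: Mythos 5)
Your setup coincides with the paper's: take an $\ell$-rotation sequence, minimize the amount of ``outside'' activity, and use \autoref{eq:distance-changes-by-one} to argue that a rotated vertex $v\notin\Bcb$ stays far from every $(T,T')$-children-bad vertex throughout and is therefore $(T,T')$-good. Up to that point the argument is sound. The deletion step, however, has a genuine gap, for two reasons. First, the ``undoing partner'' need not exist: the net-zero effect of $\sigma$ on the good vertex $v$ may be realized by many interleaved rotations rather than by one matching rotation, and some of them need not even have $v$ as an endpoint --- by item~5 of \autoref{def:rotation}, rotating the edge between $\parent(T_j,v)$ and its own parent can change the parent of $v$ while $v$ is not an endpoint of any rotated edge. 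So there is no canonical pair to delete, and ``replace it by at most one rotation'' is unsubstantiated. Second, the inference ``no vertex of $C$ enters $v$'s tree-neighborhood, so the subtrees affected by this pair are not disturbed in the meantime'' is false: $C$ records which vertices are children-bad with respect to the \emph{target} $T'$, not which vertices $\sigma$ \emph{uses}. Other rotations of $\sigma$, acting on perfectly good vertices far from $C$, can rearrange the trees near $v$ between $e_i$ and any would-be undoing, so the commutation you need does not follow from the stability claim.

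The paper closes exactly this gap by deleting rotations in bulk rather than in pairs. It introduces the set $A$ of all $\sigma$-affected vertices (those whose children set changes at some step, so $|A|\le 3k$), uses \autoref{eq:distance-changes-by-one} to show that every rotated edge has both endpoints in a single connected component of $T[A]$ and that each component has diameter at most $2k$, and takes the component $Z^{\star}$ containing the offending edge. Since $u_j\notin\Bcb$ and $\diam(Z^{\star})\le 2k$, \emph{every} vertex of $Z^{\star}$ is $(T,T')$-good, and one deletes \emph{all} rotations with both endpoints in $Z^{\star}$ at once: vertices outside $Z^{\star}$ participate in exactly the same rotations as before, while vertices inside $Z^{\star}$, being good and now untouched, already have the correct parent and children. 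This sidesteps any pairing or commutation analysis; to repair your proof, replace the pair-deletion step by this component-wise deletion.
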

\begin{proof}
Let $\sigma$  be an $\ell$-rotation sequence from $T$ to $T'$, with $\ell \leq k$. Let us denote by ${\sf out}(\sigma)$ the number of edges in $\sigma$ with at least one endvertex not in $\Bcb$. Assuming that ${\sf out}(\sigma) \geq 1$, we proceed to construct another $\ell'$-rotation sequence $\sigma'$ from $T$ to $T'$, with $\ell' \leq \ell$, such that ${\sf out}(\sigma') < {\sf out}(\sigma)$. Repeating this procedure eventually yields a sequence as claimed in the statement of the lemma.

For $i \in [\ell]$, let $u_iv_i$ be the $i$-th edge of $\sigma$ and let $T_i$  be the elimination tree obtained after the rotation of $u_iv_i$. Let also $T_0=T$. For $i \in [\ell]$, we say that a vertex $w \in V(T)$ is \textit{affected} by the rotation of $u_iv_i$ if $\child(T_{i-1},w) \neq \child(T_{i},w)$, and it is \textit{$\sigma$-affected} if it is affected by the rotation of some edge in $\sigma$. 
Recall that a rotation affects at most three vertices, and that these vertices are within distance at most two in the original tree (cf.~vertices $u,v,z$ in \autoref{fig:rotation1}). Moreover, any rotation may increase or decrease vertex distances by at most one, that is, for any $i \in [\ell]$ and any two vertices $x,y \in V(T)$, it holds that
\begin{equation}\label{eq:distance-changes-by-one}
|\dist_{T_{i-1}}(x,y) - \dist_{T_{i}}(x,y)| \leq 1.
\end{equation}

Let $A \subseteq V(T)$ be the set of $\sigma$-affected vertices, and note that $|A| \leq 3k$. The observation above about the fact that the (two or three) vertices affected by a rotation are within distance at most two (in the tree where the rotation is done), together with \autoref{eq:distance-changes-by-one}, imply that for every $Z \in {\sf cc}(T[A])$, it holds that
\begin{equation}\label{eq:Z-bounded-diameter}
\diam(Z) \leq 2k.
\end{equation}
Since by assumption ${\sf out}(\sigma) \geq 1$, there exists $j \in [\ell]$ such that $u_j  \notin \Bcb$ or $v_j  \notin \Bcb$ (or both); assume without loss of generality that $u_j  \notin \Bcb$. Let $Z^{\star} \in {\sf cc}(T[A])$ be the connected component of $T[A]$ containing vertices $u_j$ and $v_j$ (note that they indeed lie in the same component of $T[A]$ since edge $u_jv_j$ belongs to $\sigma$). Let $C \subseteq V(T)$ be the set of $(T,T')$-children-bad vertices, and recall that $\Bcb=N_T^{2k}[C \cup \root(T)]$. See \autoref{fig:bounded-diam}.

\begin{figure}[h!tb]
    \centering
    \vspace{-.35cm}
    \includegraphics[width=.5\textwidth]{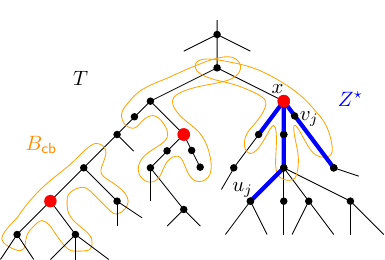}
    \caption{Illustration of the proof of  \autoref{lemma:restriction-to-balls}. $(T,T')$-children-bad vertices are depicted in red, and the balls of radius $2k$ around them are shown with orange bubbles. The connected component $Z^{\star} \in {\sf cc}(T[A])$ containing both vertices $u_j$ and $v_j$  is depicted with thick blue edges. Distances in the figure are not meant to be accurate, and an extremity of an edge without a vertex means that $T$ continues in that direction.\label{fig:bounded-diam}}
\end{figure}

\begin{claim}\label{claim:no-children-bad}
    No vertex in $Z^{\star}$ is $(T,T')$-children-bad.
\end{claim}
\begin{cproof}
Suppose towards a contradiction that the statement does not hold, and let $x \in Z^{\star} \cap C$. Since $u_j  \notin \Bcb$ (see~\autoref{fig:bounded-diam}), the definition of $\Bcb$ implies that $\dist_T(x,u_j) \geq 2k+1$, contradicting \autoref{eq:Z-bounded-diameter} because both $x$ and $u_j$ belong to $Z^{\star}$.
\end{cproof}

\begin{claim}\label{claim:no-bad}
    All vertices in  $Z^{\star}$ are $(T,T')$-good.
\end{claim}
\begin{cproof}
    By \autoref{claim:no-children-bad}, we only need to prove that
    no vertex in $Z^{\star}$ is $(T,T')$-parent-bad. Since $T[Z^{\star}]$ and no vertex in $Z^{\star}$ is $(T,T')$-children-bad, the only vertex in $Z^{\star}$ that may be $(T,T')$-parent-bad is its root, say $x$. Vertex $x$ cannot be the root of $T$, since in that case, the definition of $\Bcb$ and \autoref{eq:Z-bounded-diameter} would imply that $u_j \in \Bcb$, a contradiction. Thus, since $x$ is not the root of $T$, it has a parent $y$ in $T$. But then, if $x$ were $(T,T')$-parent-bad, then $y$ would be $(T,T')$-children-bad, so $y \in A$, implying in turn that $y$ would also belong to the connected component $Z^{\star}$ of $T[A]$, contradicting the fact that $\root(T[Z^{\star}])=x$.
\end{cproof}

Relying on \autoref{claim:no-bad}, we define from $\sigma$ an $\ell'$-rotation sequence $\sigma'$ from $T$ to $T'$, with $\ell' \leq \ell$, as follows: $\sigma'$ consists of the (ordered) edges appearing in $\sigma$, except from those with both endvertices lying in the connected component $Z^{\star}$ of $T[A]$.
\begin{claim}\label{claim:indeed-better-sequance}
$\sigma'$ is  an $\ell'$-rotation sequence from $T$ to $T'$ with $\ell' \leq \ell$ and ${\sf out}(\sigma') < {\sf out}(\sigma)$.
\end{claim}
\begin{cproof}
Note first that \autoref{eq:distance-changes-by-one} implies that both endpoints of any edge occurring in $\sigma$ belong to the same connected component of $T[A]$, and therefore removing from $\sigma$ those rotations with both endvertices in $Z^{\star}$ indeed results in a valid $\ell'$-rotation sequence from $T$ to another elimination tree $\hat{T}$ of $G$, in the sense that the edge rotations appearing in $\sigma'$ can indeed be done in a sequential way. Since at least edge $u_jv_j$ has been removed from $\sigma$, it follows that $\ell' < \ell$ (even if $\ell' \leq \ell$ would be enough for our purposes) and that ${\sf out}(\sigma') < {\sf out}(\sigma)$.

To conclude the proof, it just remains to verify that $\hat{T}=T'$. We will do that by verifying that, for every vertex $v \in V(T)$, it holds that $\child(\hat{T},v)=\child(T',v)$ and $\parent(\hat{T},v)=\parent(T',v)$. We distinguish two cases.

Consider first a vertex $v \in V(T) \setminus Z^{\star}$. In this case, $v$ and its neighbors are involved in the same rotations in $\sigma$ and in $\sigma'$. Since $\sigma$ is an $\ell$-rotation sequence from $T$ to $T'$, we get that indeed $\child(\hat{T},v)=\child(T',v)$ and $\parent(\hat{T},v)=\parent(T',v)$.

Finally, consider  a vertex $v \in Z^{\star}$. By \autoref{claim:no-bad}, $v$ is $(T,T')$-good, implying that $\child(T,v)=\child(T',v)$ and $\parent(T,v)=\parent(T',v)$. Since no rotation of $\sigma'$ involves $v$, we get that $\child(\hat{T},v)=\child(T',v)$ and $\parent(\hat{T},v)=\parent(T',v)$.
\end{cproof}
The above claim concludes the proof of the lemma.\end{proof}

By \autoref{lemma:restriction-to-balls}, we focus henceforth on trying to find an $\ell$-rotation sequence from $T$ to $T'$, with $\ell \leq k$, consisting only of edges with both endvertices in $\Bcb$. First, we will consider each of the at most $3k+1$ connected components of $T[\Bcb]$ separately. In fact, we can get a better bound, as if $T[\Bcb]$ has at least $k+1$ connected components, we can immediately conclude that we are dealing with a {\sf no}-instance, since at least one rotation is needed in each component. Thus, we may assume that $T[\Bcb]$ has at most $k$ connected components. On the other hand, since $T[\Bcb]$ is defined as the union of at most $3k+1$ balls of radius $2k$, it follows that every $Z \in {\sf cc}(T[\Bcb])$ satisfies
\begin{equation}\label{eq:bounded-diameter}
\diam(Z) \leq (3k+1)4k.
\end{equation}

Thus, by \autoref{eq:bounded-diameter}, the ``only'' obstacle to obtain the desired \FPT algorithm is that the vertices in $\Bcb$ can have an arbitrarily large degree. Note that in the particular case where the underlying graph $G$ has bounded degree, for instance if $G$ is a path~\cite{cleary2009rotation,kanj2017computing,li20233,lubiw2015flip}, the maximum degree of any elimination tree of $G$ is bounded, and therefore in that case $|\Bcb|$ is bounded by a function of $k$, and  an \FPT algorithm follows immediately. To the best of our knowledge, this result was not known for graphs other than paths.

\subsection{Description of the marking algorithm}
\label{sec:description-marking}

As discussed in \autoref{sec:sketch}, our strategy to deal with high-degree vertices in $\Bcb$ is as follows. For each connected component $Z \in {\sf cc}(T[\Bcb])$, our goal is to identify a subset $M_Z \subseteq V(Z)$ of size bounded by a function of $k$, such that we can restrict our search to rotations involving only pairs of vertices in $M_Z$. Clearly, this would yield the desired \FPT algorithm. To find such a ``small'' set $M_Z \subseteq V(Z)$, we define the notion of \textit{type} of a vertex $v \in V(Z)$, in such a way that the number of different types is bounded by a function of $k$. Then, we will prove that it is enough to keep in $M_Z$, for each type, a number of vertices bounded again by a function of $k$. 

Let henceforth $Z$ be a connected component of $T[\Bcb]$, which we consider as a rooted tree with its own set of leaves, which are not necessarily leaves in $T$. We define $\root(Z)$ to be the vertex in $V(Z)$ closest to $\root(T)$ in $T$.

Before defining the types, we need to define the \textit{trace} of a vertex $v$ in a designated vertex set $Z \subseteq V(T)$ that will correspond to a connected component of $\Bcb$. Roughly speaking, the trace of a vertex $v$ captures ``abstractly'' the neighborhood of a (whole) subtree rooted at $v$ among (the ordered set of) its ancestors within the designated vertex set $Z \subseteq V(T)$. We stress that, when considering the neighborhood in the set of ancestors,  we look at the whole subtree $T(v)$ rooted at $v$, and not only at its restriction to the set $Z$.


\begin{definition}[trace of a vertex in a component $Z$]
\label{def:trace}
Let $T$ be an elimination tree (of a graph $G$), let $Z$ be a rooted subtree of $T$ corresponding to a connected component of $\Bcb$, and let $v \in V(Z)$. The \emph{trace of $v$ in $Z$}, denoted by $\trace(T,Z,v)$, is a binary vector of dimension $\dist_T(v,\root(Z))$ defined as follows (note that if $v=\root(Z)$, then its trace is empty). For $i \in [\dist_T(v,\root(Z))]$, let $u_i \in \anc(T,v)$ be the ancestor of $v$ in $T$ such that $\dist_T(v,u_i)=i$. Then the $i$-th coordinate of $\trace(T,Z,v)$ is $1$ if $wu_i \in E(G)$ for some vertex $w \in V(T(v))$, and $0$ otherwise.
\end{definition}

See \autoref{fig:example-trace} for an example of the trace of some vertices in a component $Z$.

\begin{figure}[h!tb]
    \centering
    \includegraphics[width=.75\textwidth]{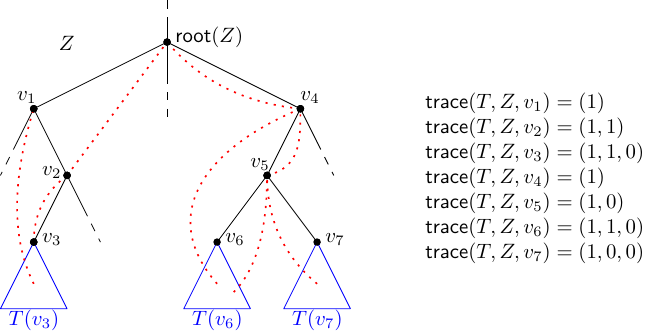}
    \caption{A component $Z$ of $T[\Bcb]$ and the trace of some of its vertices $v_1, \ldots,v_7$. Red dotted edges represent adjacencies in $G$. Note the  ${\sf trace}(T,Z,v_3) = {\sf trace}(T,Z,v_6)$, even if $v_3$ and $v_6$ are not siblings in $Z$.\label{fig:example-trace}}
\end{figure}


For a vertex $v \in V(T)$, let $\wantparent(T,T',v)$ be equal to $\emptyset$ if $\parent(T,v) = \parent(T',v)$, and to $\parent(T',v)$ otherwise. Note that, by \autoref{prop:+3kno}, the function $\wantparent(T,T',v)$ can take up to $3k+1$ distinct values when ranging over all $v \in V(T)$.



\begin{definition}[type of a vertex in a component $Z$]
\label{def:type}
Let $T$ be an elimination tree (of a graph $G$), let $Z$ be a rooted subtree of $T$ corresponding to a connected component of $\Bcb$, and let $v \in V(Z)$. The \emph{type} of  vertex $v$, denoted by $\tau(T,Z,v)$, is recursively defined as follows, where $\typechild(T,Z,v):=\{\tau(T,Z,u) \mid u \in \child(Z,v)\}$ is the set of types occurring in the children of $v$:
\begin{itemize}
    \item If $v$ is a leaf of $Z$, then $\tau(T,Z,v)$ consists of the pair $(\wantparent(T,T',v),\trace(T,Z,v))$.
    \item Otherwise, $\tau(T,Z,v)$ consists of a tuple $(\wantparent(T,T',v),\trace(T,Z,v), f_v)$, where $f_v: \typechild(T,Z,v) \to [k+1]$ is a mapping defined such that, for every $\tau \in \typechild(T,Z,v)$,
    \begin{equation}\label{eq:definition-type}
      f_v(\tau)= \min\{k+1 \ ,\ |\{u \in \child(Z,v) \mid \tau(T,Z,u) = \tau\}|\}.
    \end{equation}
\end{itemize}
\end{definition}

See \autoref{fig:example-types} for an example for $k=2$ of how the types are computed in a component $Z$.

\begin{figure}[h!tb]
    \centering
    \includegraphics[width=1.00\textwidth]{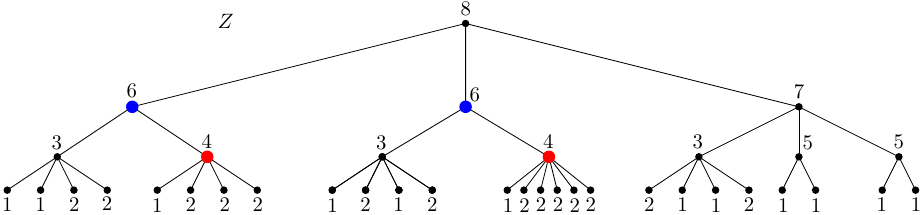}
    \caption{A component $Z$ of $T[\Bcb]$ and the types of its vertices, for an instance with $k=2$. For the sake of simplicity, different types are depicted with different numbers. Assume that the leaves have only two possible types, namely $1$ and $2$, and that all non-leaf vertices at the same distance from the root have the same trace and the same function $\wantparent(T,T',\cdot)$. Note that the red vertices have the same type (namely, $4$) because they both have one child of type $1$ and at least $k+1=3$ children of type $2$. Note also that the blue vertices have the same type (namely, $6$) because they both have one child of type $3$ and one child of type $4$.\label{fig:example-types}}
\end{figure}

\begin{lemma}\label{lem:number-types}
The set $\{\tau(T,Z,v) \mid v \in V(Z)\}$ has size bounded by a function $g(k)$, with
\begin{equation}\label{eq:number-types}
    g(k) = k^{2^{2^{.^{.^{.{2^{\Ocal(k^2)}}}}}}}, \text{ where the tower of exponentials has height $\diam(Z) = \Ocal(k^2)$.}
\end{equation}
\end{lemma}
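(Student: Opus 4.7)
The plan is to bound $|\{\tau(T,Z,v) \mid v \in V(Z)\}|$ by induction on the height of a vertex in $Z$, proceeding from the leaves of $Z$ up to $\root(Z)$. The key point driving the final bound is that, by \autoref{eq:bounded-diameter}, the depth of $Z$ (as a rooted tree) is at most $\diam(Z) = \Ocal(k^2)$, which will match the height of the tower of exponentials in \autoref{eq:number-types}.

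Before the induction, I would record two elementary auxiliary bounds. First, since $\trace(T,Z,v)$ is a binary vector of dimension $\dist_T(v,\root(Z)) \leq \diam(Z) = \Ocal(k^2)$, the number of distinct possible traces in $Z$ is at most $2^{\Ocal(k^2)}$. Second, by \autoref{prop:+3kno}, the function $\wantparent(T,T',\cdot)$ takes at most $3k+2$ distinct values on $V(Z)$ (namely, $\emptyset$, or the parent in $T'$ of one of the at most $3k+1$ $(T,T')$-parent-bad vertices). Hence the number of admissible pairs $(\wantparent(T,T',v),\trace(T,Z,v))$ is bounded by $2^{\Ocal(k^2)}$.

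For the inductive step, let $N_i$ denote an upper bound on the number of distinct types appearing among vertices of $V(Z)$ of height at most $i$ in $Z$ (with leaves being at height $0$). By \autoref{def:type}, the type of a leaf is exactly a pair $(\wantparent,\trace)$, giving the base case $N_0 \leq 2^{\Ocal(k^2)}$. For an internal vertex $v$ of height $i+1$, its type consists of such a pair together with the function $f_v$ from \autoref{eq:definition-type}. The crucial observation is that $f_v$ can be encoded as a function from the set of \emph{all} possible types of height at most $i$ (not only those in $\typechild(T,Z,v)$) into $\{0,1,\ldots,k+1\}$, where the value $0$ stands for ``this type does not appear among $\child(Z,v)$''. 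Hence there are at most $(k+2)^{N_i}$ possible $f_v$, which yields the recurrence
\begin{equation*}
    N_{i+1} \;\leq\; 2^{\Ocal(k^2)} \cdot (k+2)^{N_i}.
\end{equation*}

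Unfolding this recurrence $\diam(Z) = \Ocal(k^2)$ times produces a tower of exponentials of total height $\diam(Z)$, whose innermost entry is $2^{\Ocal(k^2)}$ (coming from $N_0$) and whose outermost base is $k$ (coming from the last application of $(k+2)^{N_i}$), precisely matching the claimed form of \autoref{eq:number-types}. The main bookkeeping obstacle will be to verify that the leading factor $2^{\Ocal(k^2)}$ introduced at each step gets absorbed into the exponent at the next level without increasing the height of the tower, which I expect to reduce to routine estimates of the type $2^{\Ocal(k^2)}\cdot (k+2)^N \leq 2^{\Ocal(N \log k)}$ valid once $N \geq k^2$, so that iterating them preserves both the height $\Ocal(k^2)$ and the innermost term $2^{\Ocal(k^2)}$.
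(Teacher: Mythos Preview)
Your proposal is correct and follows essentially the same approach as the paper: a level-by-level count of types, using that $f_v$ ranges over at most $(k+2)^{N}$ functions where $N$ bounds the number of lower types, and unfolding the resulting recurrence $\diam(Z)=\Ocal(k^2)$ times. The only cosmetic difference is that the paper indexes levels by distance from $\root(Z)$ (so the recurrence runs from the deepest level back to level~$1$ and carries an explicit sum $\sum_{j>i}(k+2)^{\tau_j}$ to account for children at varying depths), whereas you index by height and use the cumulative quantity $N_i$, which absorbs that sum automatically; both yield the same tower bound.
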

\begin{proof}
Let $d = \diam(Z)$, and note that $d \leq (3k+1)4k = \Ocal(k^2)$ by \autoref{eq:bounded-diameter}. For $i \in [d]$, let $\tau_i$ be the number of distinct types among the vertices in $V(Z)$ that are at distance exactly $i$ from  $\root(Z)$ in $T$. Formally,
\begin{equation}
\tau_i = |\{\tau(T,Z,v) \mid \dist_T(v,\root(Z)) = i\}|.
\end{equation}
By the definition of type, $\tau_d \leq 2^d \cdot (3k+1)$ since, on the one hand, all vertices at distance $d$ from $\root(Z)$ are leaves of $Z$, and the number of possible traces among leaves is at most $2^d$, and on the other hand the term $3k+1$ corresponds to the possible distinct values of the function $\wantparent(T,T',v)$. For every $i \in [d-1]$, \autoref{eq:definition-type} implies that
\begin{equation}\label{eq:computation-types}
    \tau_i \leq (3k+1) \cdot 2^i \cdot \sum_{j=i+1}^d (k+2)^{\tau_{j}},
\end{equation}
where the term $3k+1$ again comes from the possible distinct values of the function $\wantparent(T,T',v)$, the term $2^i$ comes from the possible different traces within distance $i$ from $\root(Z)$, and the term $k+2$ follows from the fact that, for every $v \in V(Z)$, the function $f_v$ can take up to $k+1$ values for each type $\tau$ of a children of $v$, together with the possibility that a type is not present among the children of $v$. Note that a vertex $v \in V(T)$ with $\dist(v,\root(Z)) = i$ may have children being roots of any possible subtree with diameter at most $d-i$, justifying the sum in  \autoref{eq:computation-types}. Clearly, the upper bound of \autoref{eq:computation-types} is maximized for $i=1$, that is, for the children of $\root(Z)$, yielding the bound claimed in \autoref{eq:number-types}.
\end{proof}

Note that, in order to compute the type of a vertex in a component $Z$, the recursive definition of the types together with \autoref{lem:number-types} easily imply the following observation, where the term $|V(G)|$ comes from checking the neighborhood of $T(v)$ within the set $\anc(T,v)$ in the computation of the trace (cf. \autoref{def:trace}).

\begin{observation}\label{obs:computation-types}
Let $T$ be an elimination tree of a graph $G$, let $Z$ be a rooted subtree of $T$ corresponding to a connected component of $\Bcb$, and let $v \in V(Z)$. Then  $\tau(T,Z,v)$ can be computed in time $g(k) \cdot |V(G)|$, where $g(k)$ is the function from \autoref{lem:number-types}.
\end{observation}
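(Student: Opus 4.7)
The plan is to compute $\tau(T,Z,v)$ by bottom-up dynamic programming on $Z(v) := Z \cap T(v)$, following the recursive definition in \autoref{def:type}: the type of $w \in V(Z(v))$ is determined by $\wantparent(T,T',w)$, $\trace(T,Z,w)$, and the already-computed types of the children of $w$ in $Z$. As preprocessing, I would run one DFS of $T$ starting from $\root(T)$ to record entry/exit times (so that tests of the form ``$x \in T(w)$'' take $O(1)$ time), together with parent pointers in both $T$ and $Z$.

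I would then process the vertices of $Z(v)$ in post-order. At each $w$, computing $\wantparent(T,T',w)$ takes $O(1)$. To compute $\trace(T,Z,w)$, I would walk up the ancestor chain $u_1,\ldots,u_{\dist_T(w,\root(Z))}$ of $w$ in $T$, of length at most $\diam(Z) = \Ocal(k^2)$ by \autoref{eq:bounded-diameter}, and for each $u_i$ decide whether some vertex of $T(w)$ is adjacent to $u_i$ in $G$ by scanning $N_G(u_i)$ and using the DFS times to test membership in $T(w)$ in constant time per neighbour. If $w$ is a leaf of $Z$, I would set $\tau(T,Z,w) = (\wantparent(T,T',w),\trace(T,Z,w))$; otherwise, since every child of $w$ in $Z$ has already been typed, I would form $f_w$ by one pass through $\child(Z,w)$, bucketing the children by type and truncating each bucket size at $k+1$ as in \autoref{eq:definition-type}.

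For the time analysis, the critical input is \autoref{lem:number-types}: since the set of possible types has size at most $g(k)$, each type admits a canonical encoding of size bounded by a function of $k$, which I would build recursively as the triple defining the type with $f_w$ stored as the sorted multiset of pairs $(\text{child-type code}, \min\{k+1,\text{count}\})$. This makes equality tests and dictionary lookups on types cost only a factor depending on $k$, so the bucketing at each $w$ costs $\Ocal(g(k)) \cdot |\child(Z,w)|$, summing to $\Ocal(g(k) \cdot |V(Z(v))|) \leq \Ocal(g(k) \cdot |V(G)|)$ along the post-order, while the trace scans charge each relevant $G$-edge at most $\Ocal(k^2)$ times and fit within the same bound. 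The most delicate point is getting this canonical encoding of types right so that the bucketing operations truly run in time depending only on $k$; this is precisely what \autoref{lem:number-types} underwrites.
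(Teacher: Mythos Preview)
The paper does not actually prove this observation: it offers a single sentence, saying that the recursive definition of types together with \autoref{lem:number-types} ``easily imply'' the bound, and that the factor $|V(G)|$ ``comes from checking the neighborhood of $T(v)$ within the set $\anc(T,v)$ in the computation of the trace''. Your bottom-up dynamic program on $Z(v)$ is precisely the natural elaboration of that remark, and your treatment of $\wantparent$, of $f_w$ via bucketing, and of canonical type encodings backed by \autoref{lem:number-types} is fine and more detailed than anything the paper writes.

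The gap is in the trace step. As you describe it, for every $w\in Z(v)$ and every ancestor $u_i$ of $w$ in $Z$ you scan \emph{all} of $N_G(u_i)$. An edge $\{u,x\}$ with $u\in Z$ is therefore touched once for \emph{every} descendant $w$ of $u$ in $Z(v)$, not $\Ocal(k^2)$ times; the number of such descendants is bounded only by $|Z|$, not by any function of $k$ --- and this is exactly the high-degree regime the whole algorithm is designed to handle. Concretely, if $Z$ is a star with centre $\root(Z)$ of $G$-degree $\Theta(|V(G)|)$ and $\Theta(|V(G)|)$ leaves, your scheme scans $N_G(\root(Z))$ once per leaf, costing $\Theta(|V(G)|^2)$. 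To make your ``each $G$-edge is charged $\Ocal(k^2)$ times'' claim true you must reorganize the computation: either iterate once over $G$-edges $\{u,x\}$ with $u\in Z$ an ancestor of $x$ and walk from $u$ towards $x$ inside $Z$ (this path has length at most $\diam(Z)=\Ocal(k^2)$, so each edge updates $\Ocal(k^2)$ trace bits), or compute traces incrementally using that the $i$-th bit of $\trace(T,Z,w)$ is the disjunction of ``$u_i\in N_G(w)$'' with the $(i{+}1)$-th bits of the traces of the children of $w$ (handling children outside $Z$ with a single additional pass). Either fix salvages the intended bound; the method you wrote down does not.
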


We will now use the notion of type and the bound given by \autoref{lem:number-types} to define the desired set
$M_Z \subseteq Z$ of size bounded by a function of $k$.
In order to find $M_Z$, we apply a  marking algorithm on $Z$, that first identifies a set $M_Z^{\sf pre} \subseteq V(Z)$ of \textit{pre-marked} vertices, whose size is not necessarily bounded by a function of $k$, and then ``prunes'' this set $M_Z^{\sf pre}$ in a root-to-leaf fashion to find the desired  set of \textit{marked} vertices $M_Z \subseteq M_Z^{\sf pre}$ of appropriate size.


Start with $M_Z^{\sf pre}= \emptyset$. For every vertex $v \in V(Z)$ and every $\tau \in \typechild(Z,v)$, do the following:
\begin{itemize}
  \item If $|\{u \in \child(Z,v) \mid \tau(Z,u) = \tau\}| \leq k +1$, add the whole set $\{u \in \child(Z,v) \mid \tau(Z,u) = \tau\}$ to $M_Z^{\sf pre}$.
  \item Otherwise, add to $M_Z^{\sf pre}$ an arbitrarily chosen subset of $\{u \in \child(Z,v) \mid \tau(Z,u) = \tau\}$ of size $k+1$.
\end{itemize}

Finally, add $\root(Z)$ to $M_Z^{\sf pre}$. We define $M_{\sf pre} = \cup_{Z \in {\sf cc}(T[\Bcb])}M_Z^{\sf pre}$ and we call it the set of \emph{pre-marked vertices} of $T$.

We are now ready to define our bounded-size set $M_Z \subseteq M_Z^{\sf pre}$. Start with $M_Z=\{\root(Z)\}$ and for $i = 0,\ldots, \diam(Z)-1$, proceed inductively as follows: if $v \in V(Z)$ is a vertex with $\dist_Z(v,\root(Z))=i$ that already belongs to $M_Z$, add to $M_Z$ the set $\child(Z,v) \cap M_Z^{\sf pre}$. Finally, for every $(T,T')$-children-bad vertex $v$ of $T$ that belongs to $Z$, we add to $M_Z$ the set $\anc(Z,v)$. This concludes the construction of $M_Z$. Note that if a vertex $v \in V(Z)$ belongs to $M_Z$, then the whole set $\anc(Z,v)$ belongs to $M_Z$ as well. We define $M = \cup_{Z \in {\sf cc}(T[\Bcb])}M_Z$, and we call it the set of \emph{marked vertices} of $T$. See \autoref{fig:example-marking} for an example of the marking algorithm.

\begin{figure}[h!tb]
    \centering
    \vspace{-.25cm}
    \hspace{-.75cm}\includegraphics[width=1.05\textwidth]{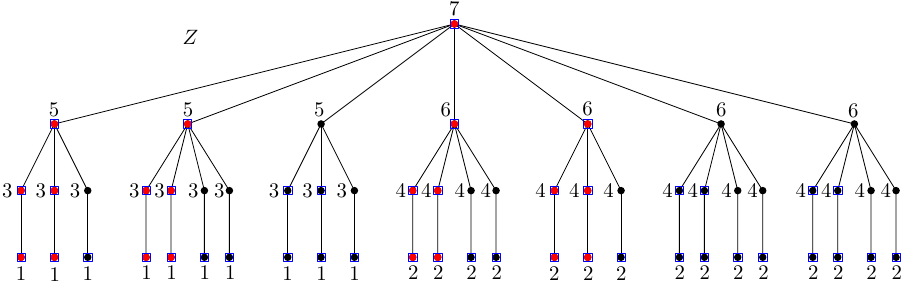}
    \caption{Example of the marking algorithm applied to a component $Z$ of $T[\Bcb]$, for an instance with $k=1$. As in \autoref{fig:example-types}, different types are depicted with different numbers. Vertices inside blue squares belong to $M_Z^{\sf pre}$, and red vertices belong to $M_Z$.\label{fig:example-marking}}
\end{figure}


\begin{lemma}\label{lem:M-bounded}
The set $M \subseteq V(T)$ of marked vertices has size bounded by a function $h(k)$,  where $h(k)$ has the same asymptotic growth as the function $g(k)$ given by \autoref{lem:number-types}. Moreover, $M$ can be computed in time $h(k) \cdot |V(G)|$.
\end{lemma}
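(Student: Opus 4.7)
The plan is to (i) bound $|M|$ via a level-by-level analysis of the marking procedure inside each component $Z \in {\sf cc}(T[\Bcb])$, and (ii) argue that $M$ can be computed in the claimed time by a straightforward bottom-up pass.

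For the size bound, fix $Z \in {\sf cc}(T[\Bcb])$ and let $d = \diam(Z) \leq (3k+1)4k = \Ocal(k^2)$ by \autoref{eq:bounded-diameter}. Denote by $m_i$ the number of vertices of $M_Z$ at distance exactly $i$ from $\root(Z)$ in $Z$, so that $m_0 = 1$. By the construction of $M_Z$, any vertex of $M_Z$ at depth $i+1$ lies in $\child(Z,v) \cap M_Z^{\sf pre}$ for some $v \in M_Z$ at depth $i$. The pre-marking rule caps at $k+1$ the number of pre-marked children of $v$ of each type $\tau \in \typechild(T,Z,v)$, and $|\typechild(T,Z,v)| \leq g(k)$ by \autoref{lem:number-types}; hence $|\child(Z,v) \cap M_Z^{\sf pre}| \leq (k+1) \cdot g(k)$. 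This yields $m_{i+1} \leq (k+1) \cdot g(k) \cdot m_i$, which telescopes to $|M_Z| \leq 2 \cdot ((k+1) \cdot g(k))^d$ before accounting for the explicit inclusion of the ancestors in $Z$ of children-bad vertices, which by \autoref{prop:+3kno} contributes at most $3k \cdot (d+1) = \Ocal(k^3)$ additional vertices. Since we may assume ${\sf cc}(T[\Bcb])$ contains at most $k$ components (otherwise we reject immediately), summing over components gives $|M| \leq h(k) = \Ocal(k \cdot ((k+1) \cdot g(k))^{\Ocal(k^2)})$. A short tower-of-exponentials calculation then certifies that $h(k)$ has the same asymptotic growth as $g(k)$: since $g(k)$ is a tower of height $\Ocal(k^2)$ whose topmost term is $2^{\Ocal(k^2)}$, multiplying any single level of the tower by a $\poly(k)$ factor is absorbed within the subsequent level and preserves both the shape and the height.

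For the computational part, I would first identify the $(T,T')$-children-bad vertices by one linear comparison of child sets, compute $\Bcb$ by a BFS of radius $2k$ in $T$, and extract ${\sf cc}(T[\Bcb])$. For each $Z$, the plan is to compute all traces at once by scanning the edges $uw \in E(G)$ with $u$ an ancestor of $w$ in $T$ and $u \in V(Z)$, each such edge being processed by walking from $w$ up to $u$ and flipping one trace bit at each intermediate vertex of $Z$; types are then assembled by a bottom-up DFS of $Z$ following \autoref{def:type}, bucketing the children of each vertex by type to evaluate $f_v$ in $\Ocal(g(k))$ time. The sets $M_Z^{\sf pre}$ and $M_Z$ are finally built by the one-pass procedures described just before the statement of the lemma, completed by a climb from each of the at most $3k$ children-bad vertices in $Z$ to include their $\Ocal(k^2)$ ancestors within $Z$. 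Aggregated across the at most $k$ components, the total running time fits into $h(k) \cdot |V(G)|$.

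The main obstacle I anticipate is the tower-of-exponentials arithmetic used to certify that $h(k)$ and $g(k)$ share the same asymptotic form; the level-by-level recursion and the algorithmic implementation are otherwise routine.
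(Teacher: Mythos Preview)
Your proposal is correct and follows essentially the same approach as the paper: both bound $|M_Z|$ by observing that each marked vertex has at most $(k+1)\cdot g(k)$ pre-marked children (plus a small additive term for the ancestors of children-bad vertices), combine this branching bound with the $\Ocal(k^2)$ diameter bound from \autoref{eq:bounded-diameter}, and then invoke type-computation to get the running time. Your level-by-level recursion $m_{i+1} \leq (k+1)\cdot g(k)\cdot m_i$ is just a more explicit rendering of the paper's terser ``diameter times maximum degree'' argument, and your algorithmic description expands what the paper compresses into a single appeal to \autoref{obs:computation-types}.
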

\begin{proof}
  Let us analyze the size of each set $M_Z$ of $T[M]$ separately, as their number is at most $3k+1$, and this factor gets subsumed by the asymptotic growth of $h(k)$. The diameter of $T[M_Z]$ is at most $(3k+1)4k = \Ocal(k^2)$ by \autoref{eq:bounded-diameter}, so in order to bound the size of $M_Z$, it just remains to bound the degree of $T[M_Z]$. By construction of $M_Z$, every vertex $v \in M_Z$ has at most $\child(Z,v) \cap M_Z^{\sf pre} + 3k$ children, where the term $3k$ corresponds to the maximum number of $(T,T')$-children-bad vertices (cf. \autoref{prop:+3kno}) that, together with their ancestors, have been marked within $Z$. Since the set $M_Z^{\sf pre}$ is defined by pre-marking, for each vertex, at most $k+1$ children of each type, the set $\child(Z,v) \cap M_Z^{\sf pre}$ has size at most $g(k) \cdot (k+1)$, where $g(k)$ is the function given by \autoref{lem:number-types}. Thus, $|M| \leq h(k)$, where $h(k)$ has the same asymptotic growth as $g(k)$.

  As for computing the set $M$ in time $h(k) \cdot |V(G)|$, it follows from the definition of $M$ (that uses the set of pre-marked vertices $M_Z^{\sf pre}$), that can be computed in time that is asymptotically dominated by computing the type of a vertex, which is bounded by $g(k) \cdot |V(G)|$ by \autoref{obs:computation-types}.
\end{proof}

\subsection{Restricting the rotations to marked vertices}
\label{sec:restriction-to-marked}

In this subsection we prove our main technical result (\autoref{lem:main}), which immediately yields the desired \FPT algorithm combined  with  \autoref{lem:M-bounded} (whose proof uses  \autoref{lemma:restriction-to-balls}), as discussed in \autoref{sec:wrapping-up}. We first need an easy lemma that will be extensively used in the proof of \autoref{lem:main}.

\begin{lemma}\label{lem:few-siblings-involved}
Let $\sigma$ be an $\ell$-rotation sequence from $T$ to $T'$, for some $\ell \leq k$. For every vertex $v \in V(T)$, there are at most  $k$ vertices  $u_1, \ldots, u_k \in \child(T,v)$ such that $\sigma$ uses a vertex in each of the rooted subtrees $T(u_1), \ldots, T(u_k)$.
\end{lemma}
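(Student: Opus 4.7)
For each child $u$ of $v$ (in $T$) whose subtree $T(u)$ contains a vertex used by $\sigma$, I would associate a \emph{first-use time} $t(u) \in [\ell]$, namely the smallest index $t \in [\ell]$ such that the $t$-th rotation of $\sigma$ rotates an edge $ab$ with $\{a,b\} \cap V(T(u)) \neq \emptyset$. The whole proof consists of showing that the map $u \mapsto t(u)$ is injective. Since $\sigma$ has $\ell \leq k$ rotations, injectivity immediately yields at most $k$ such children, as required.

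The key auxiliary statement is the following invariant, which I would prove by induction on $t$: if $u \in \child(T,v)$ and no vertex of $V(T(u))$ is used by the first $t$ rotations of $\sigma$, then the rooted subtree of $T_t$ hanging off $u$ equals $T(u)$ (note that $\parent(T_t,u)$ itself may differ from $\parent(T,u)=v$, but its whole subtree is intact as a rooted tree). For the inductive step, consider the $t$-th rotation of edge $ab$ in $T_{t-1}$ with $b=\parent(T_{t-1},a)$. By hypothesis $a,b \notin V(T(u))$. Using \autoref{def:rotation}, the only vertices whose parent changes are $a$, $b$, and those children of $b$ in $T_{t-1}$ that migrate to become children of $a$; among children of $b$ in $T_{t-1}$, the only one that could lie in $V(T(u))$ is $u$ itself, by the inductive hypothesis (since any strict descendant of $u$ in $T$ has its parent inside $V(T(u))$ at time $t-1$). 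The possible change of $\parent(\cdot,u)$ does not alter the rooted subtree hanging off $u$, so the invariant is preserved.

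Equipped with this invariant, I would finish the proof by contradiction. Suppose there exist distinct children $u_i, u_j$ of $v$ in $T$ with $t(u_i)=t(u_j)=t$, so the $t$-th rotation rotates an edge $ab$ in $T_{t-1}$ with, say, $a \in V(T(u_i))$ and $b \in V(T(u_j))$. Applying the invariant at step $t-1$ to both $u_i$ and $u_j$, the subtrees of $T_{t-1}$ rooted at $u_i$ and at $u_j$ are exactly $T(u_i)$ and $T(u_j)$, respectively, and hence are vertex-disjoint. In particular, for any $a \in V(T(u_i))$, the subtree of $T_{t-1}$ rooted at $a$ is contained in $V(T(u_i))$, so a vertex of $V(T(u_j))$ cannot be a descendant of $a$ in $T_{t-1}$; symmetrically, $a$ cannot be a descendant of $b$. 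This contradicts $a$ and $b$ being parent and child in $T_{t-1}$, and completes the argument.

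The main obstacle is carefully phrasing the invariant in the first paragraph: one must notice that it is the rooted subtree structure of $T(u)$ that is preserved, not the parent pointer of $u$ itself, and verify from \autoref{def:rotation} that no rotation touching vertices outside $V(T(u))$ can internally rearrange $T(u)$. Once this is in place, the injectivity step is a short contradiction using disjointness of the frozen subtrees in $T_{t-1}$.
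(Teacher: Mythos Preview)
Your argument is correct; one small slip to fix: with $b = \parent(T_{t-1}, a)$, it is the children of $a$ (the lower vertex) that may migrate after the rotation, not the children of $b$. Your conclusion survives unchanged, though, since in either case any migrating child has its $T_{t-1}$-parent in $\{a,b\} \subseteq V(T) \setminus V(T(u))$, so the only vertex of $V(T(u))$ whose parent can change is $u$ itself.

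The route genuinely differs from the paper's. The paper fixes one used vertex $u'_r$ per subtree, invokes a pigeonhole argument to locate a rotation $e_p = u'_iu'_j$ with neither endpoint occurring elsewhere in $\sigma$, and then observes that a new tree-edge $u'_iu'_j$ can only appear if some earlier rotation already involved $u'_i$ or $u'_j$, yielding a contradiction. You instead isolate an explicit invariant --- an untouched $T(u)$ persists as a rooted subtree of every intermediate $T_t$ --- and derive injectivity of the first-use map directly from the disjointness of two such frozen subtrees at time $t-1$. Your version is slightly longer but cleaner in two respects: the invariant is a reusable structural fact (the paper itself appeals to it informally later, e.g.\ in the proof of \autoref{lem:main} when asserting that ``the whole subtree \ldots\ remains intact throughout~$\sigma$''), and your injectivity argument sidesteps the paper's pigeonhole claim, which as literally stated is a bit loose --- what is actually needed, and what does hold if one picks each $u'_r$ as the \emph{first}-used vertex of its subtree, is merely that neither $u'_i$ nor $u'_j$ is used \emph{before} step~$p$. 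The paper's approach is terser and leans on the one-line observation about how new tree-edges appear; yours is more self-contained and makes the underlying structural reason explicit.
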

\begin{proof}
Assume towards a contradiction that there is a vertex $v \in V(T)$ having   $k+1$ children, say $u_1, \ldots u_{k+1}$, such that $\sigma$ uses $k+1$ vertices $u_1', \ldots, u_{k+1}'$ with $u_r' \in T(u_r)$ for $r \in [k+1]$. Then, since $\sigma=(e_1,\ldots,e_{\ell})$ is made of at most $k$ rotations, by the pigeonhole principle necessarily there exist two children $u_i,u_j$ of $v$ and an integer $p \in [\ell]$ such that $e_p = u'_iu'_j$, and none of $u'_i,u'_j$ occurs in any other rotation of $\sigma$ other than $e_p$. Since $e_p = u'_iu'_j$, it means that in the elimination tree where this rotation takes place, namely $T_{p-1}$, it holds that $u'_iu'_j \in E(T_{p-1})$. See \autoref{fig:important-easy-lemma} for an illustration.

On the other hand, note that if $T_2$ is an elimination tree resulting from an elimination tree $T_1$ after the rotation of an edge $wz \in E(T_1)$, and $a,b$ are vertices of $T_1$ (and $T_2$) such that $ab \notin E(T_1)$ and $ab \in E(T_2)$, then necessarily $\{a,b\} \cap \{w,z\} \neq \emptyset$, that is, necessarily the rotation involves at least one of $a$ and $b$. See \autoref{fig:rotation1} for a visualization of this claim, where the new edges that appear after the rotation of $uv$ are $zv$ and the edges between $u$ and some of the red subtrees: each of these new edges contains $u$ or $v$.

Since $u'_iu'_j \notin E(T_0) = E(T)$ because $u_i' \in T(u_i), u_j' \in T(u_j)$, and $u_i,u_j$ are $T$-siblings, and $u'_iu'_j \in E(T_{p-1})$, by the above paragraph there exists some integer $q \in [\ell]$, with $q < p$, such that the rotation $e_q$ of $\sigma$ contains at least one of $u'_i$ and $u'_j$. This contradicts that fact that none of $u'_i,u'_j$ occurs in any other rotation of $\sigma$ other than $e_p$.
\end{proof}

\begin{figure}[h!tb]
    \centering
    \vspace{-.3cm}
    \includegraphics[width=.3\textwidth]{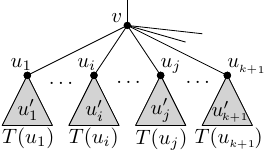}
    \caption{Illustration of the proof of \autoref{lem:few-siblings-involved}.\label{fig:important-easy-lemma}}
\end{figure}

Note that, if in the statement of \autoref{lem:few-siblings-involved} we replaced ``at most $k$ vertices'' with ``at most $2k$ vertices'', then its proof would be trivial, as any of the at most $k$ rotations of $\sigma$ involves two vertices, so at most $2k$ distinct vertices overall. In that case, for the proof of \autoref{lem:main} to go through,  we would have to replace, in \autoref{eq:definition-type} in the definition of type, ``$k+1$'' with ``$2k+1$'' when taking the minimum. In the sequel we will often use a weaker version of \autoref{lem:few-siblings-involved}, namely that for every vertex $v \in V(T)$, at most $k$ vertices in $\child(T,v)$ are used by an $\ell$-rotation sequence from $T$ to $T'$.


We are now ready to prove our main lemma.

\begin{lemma}\label{lem:main}
If $\dist(T,T')\leq k$, then there exists an $\ell$-rotation sequence from $T$ to $T'$, with $\ell \leq k$, using only vertices in $M$.
\end{lemma}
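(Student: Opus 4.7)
The plan is to proceed by contradiction: among all $\ell$-rotation sequences $\sigma$ from $T$ to $T'$ with $\ell \leq k$, I would pick one that minimizes the number of used vertices lying in $V(T) \setminus M$, and assume this quantity is positive. I would then exhibit another valid $\ell'$-rotation sequence $\sigma'$ from $T$ to $T'$, with $\ell' \leq \ell$, that strictly reduces this count, contradicting the choice of $\sigma$. By \autoref{lemma:restriction-to-balls}, the used non-marked vertices all lie in $\Bcb$, so I can focus on a single component $Z \in {\sf cc}(T[\Bcb])$. Among the non-marked used vertices of $Z$, let $v$ be one maximizing $\dist_T(v,\root(T))$; two useful properties of the construction of $M_Z$ that I would lean on are that every ancestor (in $Z$) of a $(T,T')$-children-bad vertex is marked, and that, for each vertex $w \in V(Z)$ and each type $\tau$ occurring among its children, at least $\min(|\{u \in \child(Z,w) : \tau(T,Z,u)=\tau\}|,\ k{+}1)$ such children are marked.

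\textbf{Case 1.} Vertex $v$ has a marked $T$-sibling $v'$ with $\tau(T,Z,v)=\tau(T,Z,v')$. I would define $\sigma'$ by substituting $v'$ for $v$ in every rotation of $\sigma$ that uses $v$. The verification, announced in the excerpt as \autoref{claim:case1-well-defined} and \autoref{claim:case1-good-rotation-sequence}, splits in two. First, $\sigma'$ is well-defined: the choice of $v$ as a deepest non-marked used vertex implies that no vertex of $T(v) \setminus \{v\}$ nor any vertex of $T(v')$ is used by $\sigma$, so the rotations performed at ancestors of $v$ treat the subtrees $T(v)$ and $T(v')$ purely through their traces, and $\trace(T,Z,v)=\trace(T,Z,v')$ guarantees that at every intermediate tree the two subtrees have identical adjacency patterns with respect to the rotated edge. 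Second, $\sigma'$ transforms $T$ into $T'$: equality of traces together with $\wantparent(T,T',v)=\wantparent(T,T',v')$ (built into the type) force the swap of $T(v)$ and $T(v')$ along the execution to be consistent with $T'$, while all other subtrees are left untouched. Since $v \notin M$ and $v' \in M$, one non-marked usage has been eliminated.

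\textbf{Case 2.} Every $T$-sibling of $v$ of the same type is non-marked. The marking rule then forces there to be at least $k+2$ $T$-siblings of $v$ of type $\tau(T,Z,v)$, so by \autoref{lem:few-siblings-involved} at least one such sibling has its subtree untouched by $\sigma$; however, $v$ itself is used, so I cannot yet swap at $v$. I would climb up in $Z$ to produce a non-marked ancestor $v^{\star}$ of $v$ admitting a marked $T$-sibling $v'$ of the same type with $T(v')$ entirely untouched by $\sigma$ (this is \autoref{claim:vertex-vstar-exists}); its existence follows from the abundance of same-type siblings at each level, from the fact that at most $k$ sibling-subtrees can be touched by $\sigma$, and from the hypothesis on $v$ that all non-marked used vertices lie at depth at most that of $v$. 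Exploiting the recursive definition of type, I would then build a representative function $\rho : V(T(v^{\star})) \to V(T(v'))$ by setting $\rho(v^{\star})=v'$ and, for each already-mapped vertex $w$, injectively matching each used child of $w$ of a given type $\tau$ to an unused child of $\rho(w)$ of the same type $\tau$; this is possible because $\tau(T,Z,w)=\tau(T,Z,\rho(w))$ yields $f_w(\tau)=f_{\rho(w)}(\tau)$, and \autoref{lem:few-siblings-involved} bounds the number of used children by $k$, which is strictly smaller than the threshold $k+1$ used in \autoref{eq:definition-type}. Define $\sigma'$ by replacing, in every rotation of $\sigma$, each used vertex $w \in V(T(v^{\star}))$ by $\rho(w)$, and leaving the remaining rotations unchanged.

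The main obstacle is verifying, for $\sigma'$ in Case 2, both well-definedness and that it terminates at $T'$ (the content of \autoref{claim:case2-well-defined} and \autoref{claim:case2-good-rotation-sequence}). Well-definedness follows by induction along $\sigma$: because $\rho$ preserves types and hence traces, the rotations induced on $T(v')$ mirror exactly those done on $T(v^{\star})$ in $\sigma$, so after each step the subtrees $T(v^{\star})$ and $T(v')$ remain ``$\rho$-isomorphic'' as rooted trees, and the next substituted edge is indeed present in the current intermediate tree. Correctness relies on the equality $\tau(T,Z,v^{\star})=\tau(T,Z,v')$, which encodes both identical traces (ensuring the interaction with ancestors behaves the same before and after swapping) and identical values of $\wantparent$ throughout the two subtrees, so the untouched siblings of $T(v^{\star})$ and $T(v')$ end up at their required $T'$-parents, and the swap of the two subtrees is consistent with $T'$. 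Since $v^{\star} \notin M$ but $\rho(v^{\star})=v' \in M$, and all vertices mapped by $\rho$ outside $T(v^{\star})$ are unchanged, $\sigma'$ uses strictly fewer non-marked vertices than $\sigma$, contradicting minimality and completing the proof.
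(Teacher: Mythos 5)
Your proposal follows essentially the same route as the paper's proof: the same extremal choice of $\sigma$ and of a deepest non-marked used vertex $v$, the same two cases, and in Case~2 the same ancestor $v^{\star}$ and type-preserving representative function $\rho$ into the untouched marked subtree $T(v')$. One minor slip: in Case~1 the untouchedness of $T(v')$ does not follow from $v$ being a deepest non-marked used vertex (vertices of $T(v')$ are marked and could be used); as in the paper, you must \emph{choose} $v'$ among the at least $\min(\cdot,k+1)$ marked same-type siblings as one whose subtree is untouched, which exists by \autoref{lem:few-siblings-involved} — exactly the argument you already use in Case~2.
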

\begin{proof}
Let $\sigma$ be an $\ell$-rotation sequence from $T$ to $T'$, for some $\ell \leq k$, minimizing, among all $\ell$-rotation sequences from $T$ to $T'$, the number of vertices in $V(T) \setminus M$ (that is, the non-marked vertices) used by $\sigma$. Note that $\sigma$ exists by the hypothesis that $\dist(T,T')\leq k$. If there are no vertices in $V(T) \setminus M$ used by $\sigma$, then we are done, so assume that there are. Our goal is to define another $\ell$-rotation sequence $\sigma'$ from $T$ to $T'$ using strictly less vertices in $V(T) \setminus M$ than $\sigma$, contradicting the choice of $\sigma$ and concluding the proof.

To this end, let $v \in V(T) \setminus M$ be a furthest (with respect to the distance to $\root(T)$) non-marked vertex of $T$ that is used by $\sigma$. By \autoref{lemma:restriction-to-balls}, we can assume that $v \in \Bcb$. Let $Z$ be the connected component of $T[\Bcb]$ such that $v \in Z$. For technical reasons, it will be helpful to assume that $v$ is not a leaf of $Z$. (This can be achieved, for instance, by observing that the analysis of the size of the components $Z$ in \autoref{lemma:restriction-to-balls} is not tight. Alternatively, we can just ``artificially''  increase their diameter by one~--i.e., replacing $2k$ with $2k+1$ in the definition of $\Bcb$--~so that we can safely assume that the leaves of $Z$ are never used by a rotation sequence.) Note that the choice of $v$ as a lowest (i.e., furthest) non-marked vertex of $T$ used by $\sigma$ implies that for every vertex $u \in \child(T,v)$,  the whole subtree $T(u)$ remains intact throughout $\sigma$, meaning that it appears as a rooted subtree in all the intermediate elimination trees generated by the $\ell$-rotation sequence $\sigma$. We distinguish two cases, the second one being considerably more involved, but that will benefit from the intuition developed in the first one.

\bigskip
\noindent\fbox{\textbf{Case 1}: $v$ has a marked $T$-sibling $v'$ with $\tau(T,Z,v)=\tau(T,Z,v')$.}
\medskip

Since $v$ is non-marked and by assumption it has some marked $T$-sibling, the definition of $M$ (namely, that up to $k+1$ vertices of each type are recursively marked) and \autoref{lem:few-siblings-involved} imply that $v$ has some marked $T$-sibling of the same type that is {\sl not} used by~$\sigma$. Let without loss of generality $v'$ be such a $T$-sibling of $v$. Note that the choice of $v$ as a lowest non-marked vertex used by $\sigma$ implies that the whole subtree $T(v')$ remains intact throughout $\sigma$. See \autoref{fig:mainlemma-case1} for an illustration. In this case, we define $\sigma'$ from $\sigma$ by just replacing $v$ with $v'$ in all the rotations of $\sigma$ involving $v$. We need to prove that $\sigma'$ is well-defined (that is, that the edges to be rotated do exist in the intermediate elimination trees) and that it is an $\ell$-rotation sequence from $T$ to $T'$. Once this is proved, this case is done, as $\sigma'$ uses strictly more marked vertices than $\sigma$.

\begin{claim}\label{claim:case1-well-defined}
$\sigma'$ is a well-defined $\ell$-rotation sequence.
\end{claim}
\begin{cproof}
  We need to prove that if $\sigma=(e_1,\ldots,e_{\ell})$ and $e_i=vw$ for some $i \in [\ell]$ and some $w \in V(T)$, then $v'w \in E(T'_{i-1})$.
  Let us prove it by induction. For $i=1$, assume that $e_1=vw$ for some $w \in V(T)$. The choice of $v$ as a lowest non-marked vertex of $T$ used by $\sigma$ implies that $w = \parent(T,v)$, and since $v'$ is a $T$-sibling of $v$, it follows that $v'w \in E(T'_{0})=E(T)$.
  Assume now inductively that the edges to be rotated exist up to $i-1$,
   and suppose that $e_{i+1}=vw$ for some $w \in V(T)$. Since $\tau(T,Z,v)=\tau(T,Z,v')$, it follows in particular that $\trace(T,Z,v)=\trace(T,Z,v')$ (see the green dotted edges in \autoref{fig:mainlemma-case1}), and as $v$ and $v'$ are $T$-siblings, it follows that the rooted subtrees $T(v)$ and $T(v')$ have exactly the same neighbors in the set $V(Z) \setminus (V(T(v)) \cup V(T(v')))$, that is,
  \begin{equation}\label{eq:case1-same-neighbors}
    N_G(T(v)) \cap  (V(Z) \setminus (V(T(v)) \cup V(T(v')))= N_G(T(v'))) \cap  (V(Z) \setminus (V(T(v)) \cup V(T(v'))).
  \end{equation}

\autoref{eq:case1-same-neighbors} implies that, up to $i-1$, vertex $v'$ has been following exactly the same moves in $\sigma'$ as the ones followed by vertex $v$ in $\sigma$. Thus, the fact that $vw \in E(T_i)$ implies that $v'w \in E(T'_i)$, and the claim follows.
\end{cproof}

\begin{figure}[h!tb]
    \centering
    \vspace{-.3cm}
    \includegraphics[width=.65\textwidth]{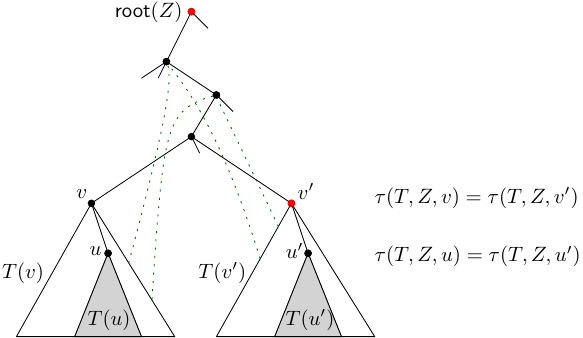}
    \caption{Illustration of Case~1 in the proof of  \autoref{lem:main}. Vertex $v$ is non-marked and used by $\sigma$, and its $T$-sibling $v'$ is marked (in red) and not used by $\sigma$.\label{fig:mainlemma-case1}}
\end{figure}

\begin{claim}\label{claim:case1-good-rotation-sequence}
$\sigma'$ is an $\ell$-rotation sequence from $T$ to $T'$.
\end{claim}
\begin{cproof}
By \autoref{claim:case1-well-defined}, $\sigma'$ is an $\ell$-rotation sequence from $T$ to some elimination tree $\hat{T}$ of $G$. It remains to prove that $\hat{T}=T'$. This is equivalent to proving that, for every vertex $u \in V(G)$ that is not a root in $T'$ or $\hat{T}$, $\parent(T',u) = \parent(\hat{T},u)$. By definition of $\sigma'$, this is clearly the case for every vertex $u$ that is not in the set $\{v\}\cup \{v'\} \cup \child(T,v) \cup \child(T,v')$.

Consider first a vertex $u \in \child(T,v)$. Note that the whole rooted subtree $T(v)$ remains intact throughout $\sigma'$, in the same way as $T(v')$ remains intact throughout $\sigma$. Moreover, since all $(T,T')$-children-bad vertices belong to $M$, and $v$ does not, it follows that $v$ is not $(T,T')$-children-bad, that is, that $\child(T,v)=\child(T',v)$. This implies that $v =  \parent(T,u) = \parent(T',u) = \parent(\hat{T},u)$ for every vertex $u \in \child(T,v)$.

Consider now vertices $v$ and $v'$. The choice of $v$ as a lowest non-marked vertex used by $\sigma$ implies that all the descendants of $v$ in $T$ are $(T,T')$-good, except maybe $v$ itself that may be $(T,T')$-parent-bad. If that is the case, the hypothesis that $\tau(T,Z,v)=\tau(T,Z,v')$ and the fact that the function $\wantparent(T,T',\cdot)$ is part of the definition of the type of a vertex imply that $v'$ is also $(T,T')$-parent-bad and that $\wantparent(T,T',v) = \wantparent(T,T',v')$.  Recall that \autoref{eq:case1-same-neighbors} discussed above implies that
$v'$ follows in $\sigma'$ the same moves that $v$ follows in $\sigma$. And since $\wantparent(T,T',v) = \wantparent(T,T',v')$, including the case where both sets are empty, it follows that $\parent(T',v) = \parent(T',v') = \parent(\hat{T},v')=\parent(\hat{T},v)$.

Finally, consider a vertex $u' \in \child(T,v')$. Note that the whole subtree $T(u')$ remains intact throughout $\sigma'$, in the same way as the whole subtree $T(u)$ remains intact throughout $\sigma$ for every vertex $u \in \child(T,v)$. Since such a vertex $u' \in \child(T,v')$ is $(T,T')$-good by the choice of $v$, we have that $v' = \parent(T,u')= \parent(T',u')$. The fact that $\tau(T,Z,v)=\tau(T,Z,v')$ implies, similarly to the discussion after
\autoref{eq:case1-same-neighbors}, that the subtree $T(u')$ follows in $\sigma'$ the same moves followed by $T(u)$ in $\sigma$, where $u \in \child(T,v)$ is a vertex such that $\tau(T,Z,u)=\tau(T,Z,u')$, which exists by the recursive definition of type (cf.~\autoref{def:type}) and the hypothesis that $\tau(T,Z,v)=\tau(T,Z,v')$; see \autoref{fig:mainlemma-case1}. Thus, $v' = \parent(T,u')= \parent(T',u') = \parent(\hat{T},u')$, and the claim follows.
\end{cproof}

\medskip
\noindent\fbox{\textbf{Case 2}: all $T$-siblings $v'$ of $v$ with $\tau(T,Z,v)=\tau(T,Z,v')$, if any, are non-marked.}
\medskip

In this case, in order to define another $\ell$-rotation sequence $\sigma'$ from $T$ to $T'$ that uses more marked vertices than $\sigma$, we need to modify $\sigma$ in a more global way than what we did in Case~1 above, where it was enough to replace vertex $v$ with a marked $T$-sibling of the same type. Now, in order to define $\sigma'$, we need a more global replacement. To this end, the following claim guarantees the existence of a very helpful vertex $v^{\star}$. See \autoref{fig:mainlemma-case2} for an illustration.

\begin{claim}\label{claim:vertex-vstar-exists}
There exists a unique vertex $v^{\star} \in \anc(Z,v)$ such that
\begin{itemize}
  \item all vertices in $T(v^{\star})$ are non-marked, 
  \item $v^{\star}$  has a marked $T$-sibling~$v'$ such that
      \begin{itemize}
        \item $\tau(T,Z,v^{\star})=\tau(T,Z,v')$, and
        \item no vertex in $T(v')$ is used by $\sigma$, and
      \end{itemize}
  \item $v^{\star}$  is the vertex closest to $v$ satisfying the above properties.
\end{itemize}
\end{claim}
\begin{cproof}
Since $\root(Z) \in M$ and $v \notin M$, the definition of the marking algorithm implies that there exist a vertex $v^{\star} \in \anc(Z,v)$ such that all vertices in $T(v^{\star})$ are non-marked, and a marked $T$-sibling~$v'$ of $v^{\star}$ with $\tau(T,Z,v^{\star})=\tau(T,Z,v')$. Moreover, the fact that $M$ is defined by recursively marking up to $k+1$ vertices of each type implies, together with \autoref{lem:few-siblings-involved} and the fact that the desired vertex $v^{\star}$ is non-marked, imply that we can choose $v'$ such that no vertex in $T(v')$ is used by $\sigma$. Finally, we can choose $v^{\star}$ in a unique way as being the vertex closest to $v$ satisfying the above properties.
\end{cproof}

\begin{figure}[h!tb]
    \centering
    \vspace{-.15cm}
    \hspace{-.75cm}\includegraphics[width=1.05\textwidth]{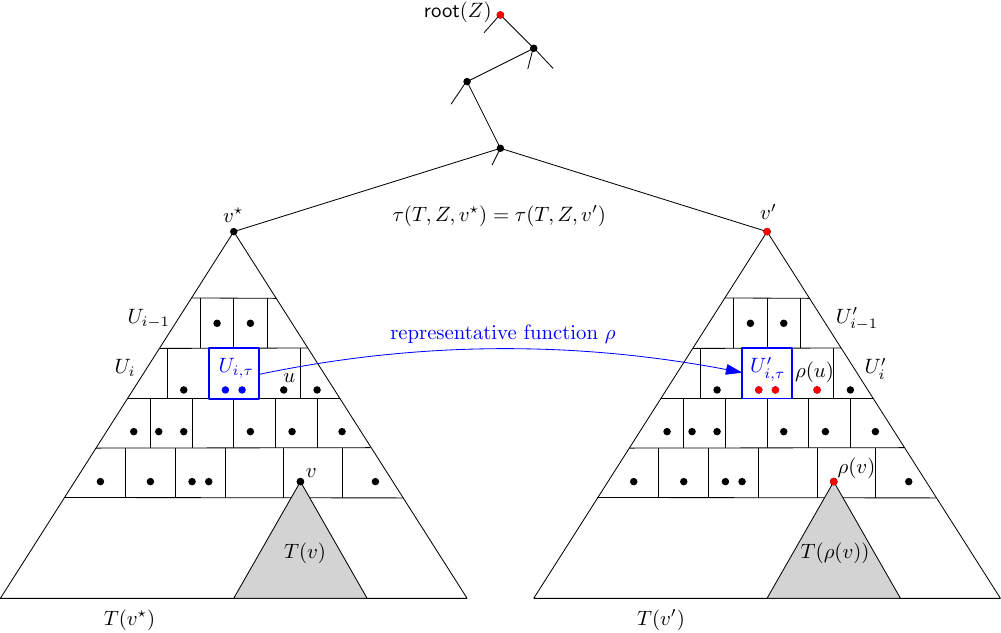}
    \caption{Illustration of Case~2 in the proof of  \autoref{lem:main}. All vertices in $T(v^{\star})$ are non-marked,  and (at least) vertex $v$ is used by $\sigma$. No vertex in $T(v')$ is used by $\sigma$, and (at least) $v'$ and all vertices in the image of $\rho$, including $\rho(v)$, are marked (in red). The sets $U_{i,\tau}$ of vertices used by $\sigma$ in $T(v^{\star}) \cap Z$ are depicted with squares, as well as their images in $T(v') \cap Z$ via the representative function $\rho$.\label{fig:mainlemma-case2}}
\end{figure}

Note that Case~1 of the proof corresponds to the particular case where $v^{\star}$ is equal to $v$ itself, but we prefer to separate both cases for the sake of readability. Intuitively, we will apply recursively the argument of Case~1 to the rooted subtrees $T(v^{\star})$ and $T(v')$, starting with $v^{\star}$ and $v'$, exploiting the definition of types to appropriately define the desired replacement of vertices to construct  $\sigma'$ from  $\sigma$.

Formally, let $U$ be the set of vertices in $V(T(v^{\star})) \cap Z$ used by $\sigma$ (so in Case~1, $U=\{v\}$), and let $\rho: U \to V(T(v')) \cap Z$ be the injective function defined as follows. For $i=0,\ldots,\dist_T(v^{\star},v)$, let $U_i \subseteq U$ be the set of vertices in $V(T(v^{\star})) \cap Z$ used by $\sigma$ that are at distance exactly $i$ from vertex $v^{\star}$ in $T$. Note that some of the sets $U_i$ may be empty, and that $U_{\dist_T(v^{\star},v)}$ contains $v$. For every type $\tau$ occurring in a vertex in $U_i$, let $U_{i,\tau}$ be the set of vertices of type $\tau$ in $U_i$. Note that, if a set $U_i$ is non-empty, then $\{U_{i,\tau} \mid \text{$\tau$ occurs in $U_i$}\}$ defines a partition of $U_i$ into non-empty sets. Let $U'_{i,\tau}$ be a set of marked vertices of type $\tau$ in $V(T(v')) \cap Z$ of size $|U_{i,\tau}|$ (we shall prove in \autoref{claim:representative-function} that it exists). Then we define $\rho|_{U_{i,\tau}}$ as any bijection between $U_{i,\tau}$ and $U'_{i,\tau}$. See \autoref{fig:mainlemma-case2} for an illustration.

\begin{claim}\label{claim:representative-function}
The function $\rho$ is well-defined and injective.
\end{claim}
\begin{cproof}
  Assuming that the sets $U'_{i,\tau}$ exist, it is clear that $\rho$ is injective. Hence, we shall prove that for every $i=0,\ldots,\dist_T(v^{\star},v)$ and every type $\tau$ occurring in a vertex in $U_i$, there exists a set $U'_{i,\tau} \subseteq V(T(v')) \cap Z$ of marked vertices of type $\tau$ with $|U'_{i,\tau}| = |U_{i,\tau}|$. We proceed by induction on $i$. For $i=0$, if $U_0 \neq \emptyset$, then $U_0=\{v^{\star}\}$ and the only type occurring in $U_0$ is $\tau(T,Z,v^{\star})=:\tau$. Thus, we can just take $U'_{0,\tau}=\{v'\}$, where $v'$ is the vertex given by \autoref{claim:vertex-vstar-exists}, so that $\rho(v^{\star}) = v'$. Note that it is not necessarily the case that vertex $v^{\star}$ is used by $\sigma$, which is equivalent, by the above definitions, to $U_0 \neq \emptyset$.

  We now prove the statement for any $i \geq 1$, using the recursive definition of type (cf. \autoref{def:type}). We may assume that $U_{i} \neq \emptyset$, as otherwise there is nothing to prove.
  The facts that $\tau(T,Z,v^{\star})=\tau(T,Z,v')$ and that $v'$ is marked imply that, for every $i \in [\dist_T(v^{\star},v)]$ and every type $\tau$, the following holds:

\begin{itemize}
  \item If $T(v^{\star})$ contains a set $A_{i,\tau}$ of at most $k+1$ vertices of type $\tau$, all at distance exactly $i$ from $v^{\star}$, then $T(v')$ contains a set $A'_{i,\tau}$ of {\sl marked} vertices of type $\tau$ with $|A'_{i,\tau}|=|A_{i,\tau}|$, all at distance exactly $i$ from $v'$.
  \item If $T(v^{\star})$ contains a set $A_{i,\tau}$ of at least $k+2$ vertices of type $\tau$, all at distance exactly $i$ from $v^{\star}$, then $T(v')$ contains a set $A'_{i,\tau}$ of {\sl marked} vertices of type $\tau$ with $|A'_{i,\tau}|=k+1$, all at distance exactly $i$ from $v'$.
\end{itemize}

It is important to pay attention to the difference between the above two items: while in the first one the set $A'_{i,\tau}$ of marked vertices has the same size as $A_{i,\tau}$, in the second one we can ``only'' guarantee, due to \autoref{eq:definition-type}, that $|A'_{i,\tau}|=k+1$, even if $A'_{i,\tau}$ may be arbitrarily larger (even of size not bounded by any function of $k$). Fortunately, thanks to \autoref{lem:few-siblings-involved}, this is enough  for finding the desired set $U'_{i,\tau}$ in order to define the representative function $\rho$, as we proceed to discuss. Note that, for every $i \in [\dist_T(v^{\star},v)]$ and every type $\tau$, it holds that $U_{i,\tau} \subseteq A_{i,\tau}$, since $\sigma$ may use only some of the vertices in  $A_{i,\tau}$.

Suppose first that, for some $i \in [\dist_T(v^{\star},v)]$ and some type $\tau$, the first item above holds. Then, since $U_{i,\tau} \subseteq A_{i,\tau}$, we can just take $U'_{i,\tau}$ as any subset of $A'_{i,\tau}$ of size $|U_{i,\tau}|$.

Suppose now that the second item above holds, that is, that $T(v^{\star})$ contains a set $A_{i,\tau}$ of at least $k+2$ vertices of type $\tau$, all at distance exactly $i$ from $v^{\star}$. By \autoref{lem:few-siblings-involved}, it holds that $U_{i,\tau} \leq k$, and since $|A'_{i,\tau}|=k+1$, we can indeed define $U'_{i,\tau}$ as any subset of $A'_{i,\tau}$ of size $|U_{i,\tau}|$, and the claim follows.
\end{cproof}

For every vertex $u \in U$ (recall that $U$ is the set of vertices in $V(T(v^{\star})) \cap Z$ used by $\sigma$), the vertex $\rho(u)$ is called the \emph{representative} of $u$. Note that the function $\rho$ is also defined on~$v$, since it is used by $\sigma$. We now define $\sigma'$ from $\sigma$ by replacing, in the rotations defining the sequence, every vertex $u \in U$ by its representative~$\rho(u)$.

The following two claims correspond respectively to \autoref{claim:case1-well-defined} and \autoref{claim:case1-good-rotation-sequence} of Case~1, and conclude the proof of the lemma.

\begin{claim}\label{claim:case2-well-defined}
$\sigma'$ is a well-defined $\ell$-rotation sequence.
\end{claim}
\begin{cproof} Similarly to the proof of \autoref{claim:case1-well-defined}, we need to prove that if
$\sigma=(e_1,\ldots,e_{\ell})$, then for every $i \in [\ell]$ the corresponding edge to be rotated  in $\sigma'$ exists in the intermediate subtree.  Suppose that $uw$ is a rotation in $\sigma$, for some $u,w \in V(T)$, such that $uw \in E(T_{i-1})$ for some $i \in [\ell]$. We distinguish three cases depending to whether the vertices $u,w$ involved in the rotation belong to $T(v^{\star})$ or not.

\begin{itemize}
  \item Suppose first that none of $u,w$ belongs to $T(v^{\star})$. It is not difficult to verify that if $uw \in E(T_{i-1})$, then $uw \in E(T'_{i-1})$ as well, where $T'_{i-1}$ is the tree obtained from $T$ by applying the first $i-1$ rotations of $\sigma'$. Indeed, the fact that $\tau(T,Z,v^{\star})=\tau(T,Z,v')$ implies that the existence of such an edge with both endpoints outside $T(v^{\star})$ is preserved when replacing the vertices in $U$ with their representatives.

  \item Suppose now that both $u,w$ belong to $T(v^{\star})$. In this case, the edge $uw$ of $\sigma$ has been replaced by $\rho(u) \rho(w)$ in $\sigma'$. Note that both vertices $\rho(u), \rho(w)$ belong to $T(v')$. The definition of the representative function $\rho$ implies that $\tau(T,Z,u)=\tau(T,Z,\rho(u))$ and $\tau(T,Z,w)=\tau(T,Z,\rho(w))$, which in particular implies that $\trace(T,Z,u)=\trace(T,Z,\rho(u))$ and $\trace(T,Z,w)=\trace(T,Z,\rho(w))$. It follows that
       vertex $\rho(u)$ (resp. $\rho(w)$) has been following the same moves within $T(v')$ in $\sigma'$ as the ones followed by vertex $u$ (resp. $w$) within $T(v^{\star})$ in $\sigma$.
       Thus, the fact that $uw \in E(T_{i-1})$ implies that $\rho(u) \rho(w) \in E(T'_{i-1})$.

  \item Finally, suppose without loss of generality that $u \in V(T(v^{\star}))$ and $w \notin V(T(v^{\star}))$. This case is similar to the proof of \autoref{claim:case1-well-defined}, with the role of $v$ replaced with $v^{\star}$. Namely, it can be proved by induction on $i$ in a similar fashion. For $i=1$, assume that $e_1=uw$ with $u \in V(T(v^{\star}))$ and $w \notin V(T(v^{\star}))$. Then, by the definition of $v^{\star}$ (cf. \autoref{claim:vertex-vstar-exists}), necessarily $u = v^{\star}$ and $w = \parent(T,v^{\star})$. Since $\rho(v^{\star}) = v'$ and  $v'$ is a $T$-sibling of $v^{\star}$, it follows that $\rho(v^{\star})w \in E(T'_{0})=E(T)$.

      Assume now inductively that the edges to be rotated exist up to $i-1$,  and suppose that $e_{i+1}=uw$ for some $u \in V(T(v^{\star}))$ and $w \notin V(T(v^{\star}))$. The fact that $\tau(T,Z,u)=\tau(T,Z,\rho(u))$ implies that $\trace(T,Z,u)=\trace(T,Z,\rho(u))$, which implies in particular that $T(u)$ and $T(\rho(u))$ have exactly the same neighbors in the set $V(Z) \setminus (V(T(v^{\star})) \cup V(T(v')))$, that is,
  \begin{equation}\label{eq:case2-same-neighbors}
    N_G(T(u)) \cap  (V(Z) \setminus (V(T(v^{\star})) \cup V(T(v')))= N_G(T(\rho(u))) \cap  (V(Z) \setminus (V(T(v^{\star})) \cup V(T(v'))).
  \end{equation}

 The fact that $\tau(T,Z,u)=\tau(T,Z,\rho(u))$ and \autoref{eq:case2-same-neighbors} imply that, up to $i-1$, vertex  $\rho(u)$  has been following the same moves within $T(v')$ and $V(Z) \setminus (V(T(v^{\star})) \cup V(T(v')))$ in $\sigma'$ as the ones followed by vertex $u$ within $T(v^{\star})$ and $V(Z) \setminus (V(T(v^{\star})) \cup V(T(v')))$ in $\sigma$.
 Thus, the fact that $uw \in E(T_i)$ implies that $\rho(u)w \in E(T'_i)$, and the claim follows.
\end{itemize}
\vspace{-.05cm}
\end{cproof}

\begin{claim}\label{claim:case2-good-rotation-sequence}
$\sigma'$ is an $\ell$-rotation sequence from $T$ to $T'$.
\end{claim}
\begin{cproof}
The proof of this claim follows that of \autoref{claim:case1-good-rotation-sequence}. By \autoref{claim:case2-well-defined}, $\sigma'$ is an $\ell$-rotation sequence from $T$ to some elimination tree $\hat{T}$ of $G$. It remains to prove that $\hat{T}=T'$. This is equivalent to proving that, for every vertex $u \in V(G)$ that is not a root in $T'$ or $\hat{T}$, $\parent(T',u) = \parent(\hat{T},u)$. By definition of $\sigma'$, this is clearly the case for every vertex $u$ that is not in the set $Z \cap (V(T(v^{\star})) \cup  V(T(v')))$, given that $\trace(T,Z,v^{\star}) = \trace(T,Z,v')$. We distinguish three cases to deal with the vertices in $Z \cap (V( T(v^{\star})) \cup  V(T(v')))$.

\begin{itemize}
\item Consider first a vertex $u \in Z \cap V(T(v^{\star}))$ different from $v^{\star}$. Note that the whole rooted subtree $T(v^{\star})$ remains intact throughout $\sigma'$, in the same way as the whole subtree $T(v')$ remains intact throughout $\sigma$. Moreover, since all $(T,T')$-children-bad vertices belong to $M$, and $v^{\star}$ does not (cf. \autoref{claim:vertex-vstar-exists}), it follows that $v^{\star}$ is not $(T,T')$-children-bad, that is, that $\child(T,u)=\child(T',u)$ for every vertex $u \in Z \cap V(T(v^{\star}))$. This implies that $\parent(T,u) = \parent(T',u) = \parent(\hat{T},u)$ for every vertex $u \in Z \cap V(T(v^{\star}))$ different from~$v^{\star}$.

 \item Consider now vertices $v^{\star}$ and $v'$. The definition of $v^{\star}$ (cf. \autoref{claim:vertex-vstar-exists})
implies that all the descendants of $v^{\star}$ in $T$ are $(T,T')$-good, except maybe $v^{\star}$ itself that may be $(T,T')$-parent-bad. If that is the case, the fact that $\tau(T,Z,v^{\star})=\tau(T,Z,v')$ and the fact that the function $\wantparent(T,T',\cdot)$ is part of the definition of the type of a vertex imply that $v'$ is also $(T,T')$-parent-bad and that $\wantparent(T,T',v^{\star}) = \wantparent(T,T',v')$.  Similarly to \autoref{eq:case1-same-neighbors}, the fact that $\tau(T,Z,v^{\star}) = \tau(T,Z,v')$ implies that
\begin{equation}\label{eq:case2-same-neighbors-again}
    N_G(T(v^{\star})) \cap  (V(Z) \setminus (V(T(v^{\star})) \cup V(T(v')))= N_G(T(v')) \cap  (V(Z) \setminus (V(T(v^{\star})) \cup V(T(v'))),
  \end{equation}
which implies that $v'$ follows in $\sigma'$ the same moves that $v$ follows in $\sigma$. And since $\wantparent(T,T',v) = \wantparent(T,T',v')$, including the case where both sets are empty, it follows that $\parent(T',v^{\star}) = \parent(T',v') = \parent(\hat{T},v')=\parent(\hat{T},v^{\star})$.

\item Finally, consider a vertex $u' \in Z \cap V(T(v')$ different from $v'$. First note that if neither $\parent(T,u')$ nor any descendant of $u$ in $T$ (including $u$ itself) are used by $\sigma'$, then, since no vertex in $T(v')$ is used by $\sigma$, it follows that $\parent(T,u')= \parent(T',u') = \parent(\hat{T},u')$.

    We now proceed recursively, by first considering a lowest vertex $u' \in Z \cap V(T(v')$ such that $\parent(T,u')$ is used by $\sigma'$. Note that such a vertex $u'$ exists because (at least) $\rho(v)$ is used by $\sigma'$ and we may assume that the leaves of $Z \cap V(T(v^{\star})$ are not used by $\sigma$, so by definition of $\rho$ this is also the case for the leaves of $Z \cap V(T(v')$. Note that, by the choice of $u'$,  the whole subtree $T(u')$ remains intact throughout $\sigma'$. 
Since such a vertex $u'$ is $(T,T')$-good because no vertex in $T(v')$ is used by $\sigma$, we have that $\parent(T,u')= \parent(T',u')$. The fact that $\tau(T,Z,v^{\star})=\tau(T,Z,v')$ implies that the subtree $T(u')$ follows in $\sigma'$ the same moves within $T(v')$ as the moved followed by $T(u)$ in $\sigma$ within $T(v^{\star})$, where $u \in V(T^{\star})$ is a vertex such that $\tau(T,Z,u)=\tau(T,Z,u')$, which exists by the recursive definition of type and the hypothesis that $\tau(T,Z,v^{\star})=\tau(T,Z,v')$. Thus, $\parent(T,u')= \parent(T',u') = \parent(\hat{T},u')$.

Finally, we consider vertices $u'$ bottom-up in $V(T(v')) \cap Z$, assuming inductively that their strict descendants in $V(T(v') \cap Z$ already have their desired parent in $\hat{T}$ and exploiting the recursive definition of type. When encountering such a vertex $u'$, we further distinguish three cases.

\begin{itemize}
  \item If neither $u'$ nor $\parent(T,u')$ are used by $\sigma'$, since no vertex in $T(v')$ is used by $\sigma$, it follows that $\parent(T,u')= \parent(T',u') = \parent(\hat{T},u')$.
  \item If $u'$ is used by $\sigma'$, the analysis is similar to the case of $v,v'$ in the proof of \autoref{claim:case1-good-rotation-sequence} (cf.~\autoref{fig:mainlemma-case1}), by replacing the roles of $v$ and $v'$ in \autoref{claim:case1-good-rotation-sequence}, respectively, by $u$ and $u'$, where $u \in V(T^{\star})$ is a vertex such that $\tau(T,Z,u)=\tau(T,Z,u')$. Note that $u$ is used by $\sigma$ and that $\rho(u)=u'$. We can recursively assume that for all strict descendants $z$ of $u'$, $\parent(T',z) = \parent(\hat{T},z)$. The fact that $\tau(T,Z,u) = \tau(T,Z,u')$ implies that $u'$ follows in $\sigma'$ within $T(v')$ the same moves that $u$ follows in $\sigma$ within $T(v^{\star})$. And since $\wantparent(T,T',u) = \wantparent(T,T',u')$ (recall that the function $\wantparent(T,T',\cdot)$ is part of the definition of type), including the case where both sets are empty (meaning that they already have their respective desired parents), and since the whole subtree $T(u')$ remained intact throughout $\sigma$,
      it follows that $\parent(T',u') = \parent(\hat{T},u')$.

 \item Otherwise, if $u'$ is not used by $\sigma'$ but $\parent(T,u')$ is used by $\sigma'$, we apply the same arguments as in the case where $u'$ is a lowest such a vertex as discussed above, by replacing the property that ``the whole subtree $T(u')$ remains intact throughout $\sigma'$'' with ``for every strict descendant $z$ of $u'$, it holds that $\parent(T',z) = \parent(\hat{T},z)$'', which we can recursively assume. Thus, we conclude in the same way that $\parent(T,u') = \parent(T',u') = \parent(\hat{T},u')$.
\end{itemize}
\end{itemize}
The proof of the claim is now complete.
\end{cproof}

By \autoref{claim:case2-good-rotation-sequence},
$\sigma'$ is an $\ell$-rotation sequence from $T$ to $T'$, and it uses strictly more marked vertices than $\sigma$, because no vertex of $T(v^{\star})$ is marked (by the conditions in \autoref{claim:vertex-vstar-exists}), and within $T(v')$ there is at least one marked vertex used by $\sigma'$, namely $\rho(v)$ (cf. \autoref{fig:mainlemma-case2}). This concludes Case~2.
O

\medskip

In both cases, we have defined from $\sigma$ another $\ell$-rotation sequence $\sigma'$ from $T$ to $T'$ using strictly less vertices in $V(T) \setminus M$ than $\sigma$, contradicting the choice of $\sigma$ and concluding the proof of the lemma.
\end{proof}

\subsection{Wrapping up the algorithm}
\label{sec:wrapping-up}

We finally have all the ingredients to prove our main result, which we restate for convenience.

\maintheorem*

\begin{proof}
Given a connected graph $G$, two elimination trees $T$ and $T'$ of $G$, and a positive integer $k$ as input of the \kelimination problem, we proceed as follows.

By \autoref{prop:+3kno}, we can assume that our instance contains at most $3k$ $(T,T')$-children-bad vertices, which can be clearly identified in time linear in $|V(G)|$. Recall from \autoref{def:ball-around-children-bad} that, if we
let $C \subseteq V(T)$ be the set of $(T,T')$-children-bad vertices,
$\Bcb = N_T^{2k}[C \cup \root(T)]$. By \autoref{lemma:restriction-to-balls},
If $\dist(T,T')\leq k$, then there exists an $\ell$-rotation sequence from $T$ to $T'$, for some $\ell \leq k$, using only vertices in $\Bcb$.

We now apply our marking algorithm to find the set $M \subseteq \Bcb$ of marked vertices. By \autoref{lem:M-bounded}, the set $M$ has size at most $h(k)$ and can be computed in time $h(k) \cdot |V(G)|$,
where  $h(k)$ has the same asymptotic growth as the function $g(k)$ given by \autoref{lem:number-types}. By \autoref{lem:main}, if $\dist(T,T')\leq k$, then there exists an $\ell$-rotation sequence from $T$ to $T'$, for some $\ell \leq k$, using only vertices in $M$.

Thus, we can solve the problem by applying the following naive brute force algorithm: for every $\ell \in [k]$ (we may assume that $T$ and $T'$ are distinct), we guess all possible sets of $\ell$ ordered pairs of vertices in $M$ (so, $2 \ell$ vertices overall, allowing repetitions), and for each such an ordered set of $\ell$ pairs, we apply the corresponding rotations to $T$ and check whether the resulting elimination tree is equal to $T'$ or not, which can be done in time linear in $k \cdot |V(G)|$. Naturally, if for some of the guessed pairs to be rotated, that edge does not exist in the corresponding intermediate elimination tree of $G$, we discard that guess.

The running time of the resulting algorithm is upper-bounded by $\Ocal(k \cdot |M|^{2k} \cdot |V(G)|)$, and the theorem follows.
\end{proof}

\section{Hardness results on hypergraphic polytopes}
\label{sec:hardness}

In this section we provide hardness results for computing distances on hypergraphic polytopes. We first introduce hypergraphic polytopes in \autoref{sec:hypergraphic}, and restate the problems in terms of orientations of hypergraphs. In \autoref{sec:W2} we prove that computing distances on hypergraphic polytopes  is ${\sf W}[2]$-hard parameterized by the distance. In
\autoref{sec:inapprox} we prove inapproximability results, and in \autoref{sec:no-kernels} we rule out the existence of polynomial kernels parameterized by the number of vertices of the hypergraph.

 \subsection{Distances on hypergraphic polytopes}
 \label{sec:hypergraphic}

We first need to introduce some definitions, mostly taken from the recent paper of Cardinal and Steiner~\cite{CardinalS25}. Every hypergraph $H = (V, \mathcal{E})$, where $\mathcal{E} \subseteq 2^V \setminus \{\emptyset\}$, induces a  submodular function $f_H$ defined such that, for every $U \subseteq V$,
$$f_H(U) := |\{e \in \mathcal{E} : U \cap e\neq \emptyset\}|.$$

The base polytope induced by the polymatroid associated with $f_H$ is called the \emph{hypergraphic polytope} of the hypergraph $H$, and denoted by $P_H$.
When $H$ is a graph, the polymatroids are the graphical zonotopes, whose vertices are in bijection with acyclic orientations of the graph. Graph associahedra are also a particular type of hypergraphic polytopes, as discussed for instance in~\cite{CardinalS25}. We refer the reader to~\cite{CardinalS25} and the references therein for more context and details about hypergraphic polytopes; see also~\cite{BBM19,Hel74,VW93,AA23,PRW08,Post09,Reh22}.

It is known that the vertices of the hypergraphic polytope $P_H$ are in one-to-one
correspondence with so-called acyclic orientations of the hypergraph $H$, defined as follows. An \emph{orientation} of a hypergraph $H = (V, \mathcal{E})$ is a mapping $h : \mathcal{E} \rightarrow V$ such that $h(e) \in e$ for every $e \in \mathcal{E}$.
The element $h(e)$ is called the \emph{head} of an edge $e$. With every orientation $h$ of $H$ we can associate a directed graph $D_h$ that has vertex set $V$ 
and an arc from a vertex $u$ to a vertex $v$ if and only if there exists some $e \in \mathcal{E}$ such that $h(e) = v$ and $u \in e \setminus {v}$.
Intuitively, $D_h$ is the union, over all $e \in \mathcal{E}$, of the inwards-oriented stars centered at the head of $e$ and whose leaves are all other elements of $e$. With this notion at hand, we call an orientation $h$ of $H$ \emph{acyclic} if $D_h$ is an acyclic digraph.

Given two distinct acyclic orientations $h_1$, $h_2$ of a hypergraph $H = (V, \mathcal{E})$, we say that $h_2$ is obtained from $h_1$ by a \emph{flip} at an ordered pair $(u, v) \in V^2$ of distinct vertices, if $h_2$ is obtained from $h_1$ by redefining the head of every hyperedge $e \in \mathcal{E}$ with $e \supseteq \{u, v\}$ that satisfies $h_1(e) = u$ as $h_2(e) = v$, while
keeping all other head assignments the same as in $h_1$. We note that the above definition of a flip is symmetric, in the following sense: If $h_2$ can be obtained from $h_1$ by flipping at $(u, v)$, then $h_1$ is obtained from $h_2$ by flipping at $(v, u)$.

Also, note that when $h_2$ is obtained from $h_1$ by flipping at $(u, v)$, then a flip at
$(v, u)$ for $h_1$ is impossible (due to acyclicity). Thus, in the following we can say without ambiguity
that two acyclic orientations $h_1$ and $h_2$ are obtained from each other by flipping
the unordered pair $\{u, v\}$ of distinct vertices if $h_2$ can be obtained from $h_1$ by
flipping $(u, v)$ or $(v, u)$. A nice illustration of a flip can be found in~\cite[Figure 4.1]{CardinalS25}.

The following statement can be deduced as a special case of~\cite[Theorem 2.18]{BBM19}.

\begin{proposition}\label{prop:equivalence-hypergraphic-polytopes}
Let $H = (V, \mathcal{E})$ be a hypergraph. Two vertices $x, y$ of the polytope
$P_H$ corresponding to acyclic orientations $h_1, h_2$ of $H$ are adjacent on the skeleton
of $P_H$ if and only if there exists a pair $\{u, v\} \subseteq V$ of distinct vertices such that $h_1$ and $h_2$ can be obtained from each other by flipping $\{u, v\}$.
\end{proposition}

\autoref{prop:equivalence-hypergraphic-polytopes} implies that the computation of shortest paths on the skeleton of a hypergraphic
polytope $P_H$ is equivalent to a flip distance problem between acyclic orientations of the corresponding hypergraph $H$, formally defined as follows.

\medskip
\defproblema{Flip Distance Between Acyclic Orientations}
{A hypergraph $H = (V, \mathcal{E})$, two acyclic orientations $h_1$, $h_2$ of $H$, and a positive integer $\ell$.}
{Is there a sequence of vertex pairs $\{u_i, v_i\}$, $i  \in [\ell]$ such that $h_2$ can be obtained from $h_1$ by flipping, in order, the pairs $\{u_1, v_1\},\ldots,\{u_{\ell},v_{\ell}\}$?}

In the next subsections we provide hardness results for the above problem.

\subsection{W[2]-hardness}
\label{sec:W2}

In our next result we prove that a generalization of \autoref{thm:main} to hypergraphic polytopes is unlikely.

\begin{theorem}\label{thm:w2hard}
The {\sc Flip Distance Between Acyclic Orientations} problem is ${\sf W}[2]$-hard when parameterized by the distance $\ell$, even restricted to hypergraphs with polynomially many hyperedges in terms of the number of vertices.
\end{theorem}
\begin{proof}
The proof is a parameterized reduction from {\sc Dominating Set} parameterized by the size of the solution, which is well known to be ${\sf W}[2]$-hard~\cite{DoFe13}.

Let $G$ be a simple graph and $k$ be an integer as input of {\sc Dominating Set}, that is, the question is whether $G$ admits a dominating set of size at most $k$. We construct from $G$ a hypergraph $H$ as follows:
\begin{itemize}
    \item $S:=\{s_i \mid 1\leq i\leq k^2\}$.
    \item $T:=\{t_j \mid 1\leq j\leq k^2\}$.
    \item $V(H):=V(G)\cup S\cup T$.
    \item $\mathcal{E}(H):=\{N[v]\cup \{s_i,t_j\} \mid v\in V(G), i,j\in[1,k^2]\}$.
\end{itemize}

For the sake of the construction of the above hypergraph, we can assume that the input graph $G$ contains no two distinct vertices $u,v$ with $N[u] = N[v]$ (as if it does, one of them can be safely deleted without changing the type of the instance). Under this assumption, note that all the hyperedges in $\mathcal{E}(H)$ are distinct. Also, note that $|\mathcal{E}(H)|$ is indeed polynomially bounded as a function of $|V(H)|$.

Now, consider the following acyclic orientations of $H$:
\begin{itemize}
    \item $h_1$: for each $e\in \mathcal{E}(H)$, $h(e):=e\cap S$.
    \item $h_2$: for each $e\in \mathcal{E}(H)$, $h(e):=e\cap T$.
\end{itemize}

For $i \in \{1,2\}$, let $D_{h_i}$ be the directed graph associated with $H$ with respect to $h_i$. Note that, by construction, $D_{h_i}$ is indeed acyclic for $i \in \{1,2\}$. We define an instance of {\sc Flip Distance Between Acyclic Orientations} as $H, h_1, h_2, \ell := 2k^3$.

We shall now prove that $G$ has a dominating set of size at most $k$ if and only if one can obtain $h_2$ from $h_1$ by flipping at most $2k^3$ pairs of vertices of $H$. The main idea of the reduction is that, to go from $h_1$ to $h_2$, instead of changing the head of the hyperedges by flipping all pairs $\{s_i,t_j\}$ for $i,j \in [k^2]$, it is more efficient to ``pivot'' through the vertices in a dominating set $D$ of $G$, namely by first setting the vertices in $D$ as heads of all hyperedges (which originally had heads in $S$), and then flipping the heads from $D$ to the target heads in $T$; see \autoref{fig:hardness} for an illustration. Along this sequence of flips, we need to be careful with the order in which they are performed, so that all intermediate orientations of $H$ are also acyclic, as required in the definition of the problem. The details follow.

\begin{figure}[h!tb]
    \centering
    \includegraphics[width=1.05\textwidth]{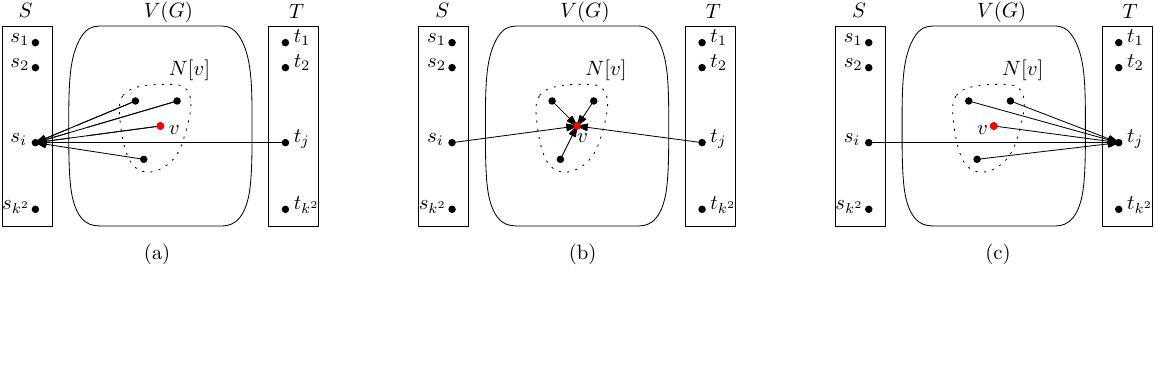}
    \vspace{-1.7cm}
    \caption{Illustration of the reduction of \autoref{thm:w2hard}. A generic hyperedge $e$ of $H$ with vertices $N[v] \cup \{s_i,t_j\}$ is shown, together with the following orientations: (a) initial orientation $h_1$. (b) intermediate orientation pointing towards the vertices in the dominating set $D$, assuming that $v \in D$. (c) target orientation $h_2$.\label{fig:hardness}}
\end{figure}

\medskip

($\Rightarrow$) Let $D=\{b_1,b_2,\ldots,b_k\}$ be a dominating set of $G$ of size $k$ (for simplicity, we can assume that the dominating set has size {\sl exactly} $k$). We define the following sequence of $\ell = 2k^3$ flips from $h_1$ to $h_2$, while guaranteeing that  each of the intermediate orientations of $H$ is also acyclic:
\begin{itemize}
    \item Flip $\{s_i,b_r\}$ for each $s_i\in S$ and $b_r\in D$ in lexicographic order, that is, we first do the flips $\{s_1,b_r\}$ for $r = 1, \ldots k$, then $\{s_2,b_r\}$ for $r = 1, \ldots, k$, and so on ($k^3$ flips in total).
    \item Flip $\{b_r,t_j\}$ for each $b_r\in D$ and $t_j\in T$ in reverse lexicographic order), that is, we first do the flips $\{b_k,t_j\}$ for $j = k^2, \ldots, 1$, then $\{b_{k-1},t_j\}$ for $j = k^2, \ldots, 1$, and so on ($k^3$ flips in total).
\end{itemize}

Clearly, the above sequence performs $2k^3$ flips. In addition, since $D$ is a dominating set of $G$, it holds that $e\cap D\neq \emptyset$ for each hyperedge of $H$. Thus, the first $k^3$ flips change $h(e)$ for each $e$ setting it to some vertex of $D$. After that, the last $k^3$ flips change $h(e)$ for each $e$ setting it to some vertex of $T$, as desired. To conclude this direction of the equivalence, it just remains to verify that, indeed, all intermediate orientations are acyclic.

\begin{claim}\label{claim:all-acyclic}
Along the above sequence of flips, all intermediate orientations of $H$ remain acyclic.
\end{claim}
\begin{cproof} We verify the claim by providing topological orderings of $V(H)$ such that all arcs in the corresponding orientations go from right to left. It is well known that the existence of such a topological ordering of a digraph is equivalent to it being acyclic.

Let us start with the first sequence of $k^3$ flips. Initially, in $D_{h_1}$, where the vertices in $S$ are sinks and all other vertices are sources, we consider the following topological ordering:
$$
s_{k^2}, \ldots, s_2, s_1, b_1, b_2, \ldots,  b_k, V(G) \setminus D, T,
$$

where we can order the vertices in  $V(G) \setminus D$ and $T$ arbitrarily. Now, each of the flips swaps the corresponding pair of vertices in the above ordering. Namely, after the first flip  $\{s_1,b_1\}$, the resulting topological ordering is (the pair of swapped vertices is shown in bold)
$$
s_{k^2}, \ldots, s_2, \boldsymbol{b_1},\boldsymbol{s_1}, b_2, \ldots,  b_k, V(G) \setminus D, T,
$$
then $s_1$ is swapped with $b_2$ and so on, and after the flip $\{s_1,b_k\}$ the corresponding ordering is
$$
s_{k^2}, \ldots, s_2, b_1, b_2, \ldots,  \boldsymbol{b_k}, \boldsymbol{s_1}, V(G) \setminus D, T.
$$
Intuitively, we have shifted $s_1$ from the left to the right of the set $D = \{b_1, \ldots, b_k\}$. Then the same sequence of swaps is triggered for $s_2$, and so on, until all vertices of $S$ are swapped, resulting in the ordering
$$
b_1, b_2, \ldots,  \boldsymbol{b_k}, \boldsymbol{s_{k^2}}, \ldots, s_2, s_1, V(G) \setminus D, T.
$$
Let us verify that all the above orderings are indeed topological. Consider an intermediate ordering resulting after a generic flip $\{s_i,b_r\}$, namely,
$$
s_{k^2}, \ldots ,s_{i+1}, b_1,b_2, \ldots, \boldsymbol{b_{r}},\boldsymbol{s_i}, b_{r+1},\ldots, b_k, s_{i-1},\ldots,s_1, V(G) \setminus D, T,
$$
and let us verify that, in the current orientation, all arcs go from right to left. Note that all vertices before $b_1$ are sinks, and all vertices after $b_k$ are sources, because $D$ is a dominating set of $G$. Moreover, vertex $s_i$ is a sink for all arcs from $b_{r+1}, \ldots, b_k$, and a source for all arcs to $b_1, \ldots, b_r$. Thus, it just remains to verify that there cannot be an arc from a vertex $b_p$ to a vertex $b_q$ with $p<q$. Suppose for contradiction that it is the case. Then this arc $(b_p,b_q)$ has been created by a flip $\{s_x,b_q\}$ for some $x \in [k^2]$, and necessarily $b_p$ and $b_q$ are both in the closed neighborhood of a vertex of $G$, so they both appear together in hyperedges of $H$. Let $e$ be one of such hyperedges. By the lexicographical order in which the flips are done, the flip $\{s_x,b_p\}$ was done before the flip $\{s_x,b_q\}$, and therefore the head of $e$ right before the flip $\{s_x,b_q\}$ was $b_p$, so the head of $e$ is not affected by the flip $\{s_x,b_q\}$, and we conclude that the arc $(b_p,b_q)$ cannot exist, a contradiction.

\medskip

Let us now focus on the second sequence of $k^3$ flips, the analysis being completely symmetric to the above one. Namely, we start with the following topological ordering of $V(H)$, observing that after the first sequence of flips, the vertices in $D$ are sinks, and all other vertices are sources (note that this ordering is different from the one considered at the end of the first sequence of flips):
$$
b_1, \ldots ,b_k, t_{k^2},\ldots,t_1, V(G) \setminus D, S,
$$
where we can order the vertices in  $V(G) \setminus D$ and $S$ arbitrarily. After the first flip  $\{b_k,t_{k^2}\}$, the resulting topological ordering is
$$
b_1, \ldots ,b_{k-1}, \boldsymbol{t_{k^2}},\boldsymbol{b_k}, t_{k^2 -1},\ldots, t_1, V(G) \setminus D, S.
$$
 Then, similarly to the first sequence of flips, $b_k$ is shifted until the right of $t_1$, then the same applies to $b_{k-1}$, and so on, until we obtain the final ordering
$$
t_{k^2},\ldots,\boldsymbol{t_1},\boldsymbol{b_1}, \ldots ,b_k, V(G) \setminus D, S,
$$
where all vertices in $T$ are sinks and all other vertices are sources. Similarly to the analysis for the first sequence of flips, one can verify that  all intermediate orientations of $H$ remain acyclic.
\end{cproof}

\medskip

($\Leftarrow$) Conversely, suppose now  that there exists a sequence $F$ of flips of pairs of vertices of $H$ going from $h_1$ to ${h_2}$, with $|F| \leq 2k^3$.

Note that there exists some $i \in [k^2]$ such that the pair $\{s_i,t_j\}$ is not in $F$, as otherwise the number of flips would be at least $k^4$, and assuming that $k\geq 3$ (otherwise \textsc{Dominating Set} can be solved in polynomial time), we have that $k^4 > 2k^3$. Therefore, for each $e=N[v]\cup \{s_i,t_j\}$ where $v\in V(G)$, since all orientations of  arcs incident with $s_i$ and $t_j$ need to change, there is a pair $\{s_i,b\}$ in $F$ where $b\in N[v]$, and there is as well a pair $\{b,t_j\}$ in $F$.

Let $D=\{b \in V(G)\mid \{s_i,b\}\in F\}$. It follows that $D$ is a dominating set of $G$, because in $h_1$, vertex $s_i$ has all arcs incoming from the whole $V(G)$, and in $h_2$, vertex $s_i$ has no arcs at all from/to $V(G)$. To conclude the proof, it remains to show that $|D|\leq k$.

To this end, let $p_1$ be the number of pairs $\{s_i,t_j\}\in F$, and let $p_2=k^4-p_1$. Let also $\alpha$ (resp. $\beta$) be the number of vertices $s_i \in S$ (resp. $t_j \in T$) with some pair $\{s_i,t_j\}\notin F$, and note that, by definition, $\alpha\cdot \beta \geq p_2$. Then it holds that
\begin{equation}\label{eq:F}
|F|\geq p_1+\alpha\cdot |D| + \beta\cdot |D| = p_1 + (\alpha + \beta) \cdot |D|.
\end{equation}


\begin{claim}\label{claim:minimized}
$\alpha + \beta \geq 2 \sqrt{p_2}$.
\end{claim}
\begin{cproof}
Write $\alpha = \sqrt{p_2}+c$ for some real number $c \geq 0$. Since $\alpha\cdot \beta \geq p_2$, it follows that
$$
\alpha + \beta = \sqrt{p_2}+c + \beta \geq \sqrt{p_2}+c + \frac{p_2}{\sqrt{p_2}+c} \geq 2 \sqrt{p_2}.
$$
\end{cproof}
%
%
%
%
%
%
Using \autoref{claim:minimized} in \autoref{eq:F} we get that
\begin{equation}\label{eq:F2}
|F|\geq  p_1 + 2 \sqrt{p_2} \cdot |D|.
\end{equation}
Using that $|F|\leq 2k^3$ and that  $p_1 + p_2 = k^4$, from \autoref{eq:F2} we get that
\begin{equation}\label{eq:Bbound}
|D|\leq  \frac{|F| - p_1}{2 \sqrt{p_2}} \leq \frac{2k^3 - p_1}{2 \sqrt{p_2}} = \frac{2k^3 - p_1}{2 \sqrt{k^4 - p_1}}.
\end{equation}
We will conclude the prove by using the following simple monotonicity argument.
\begin{claim}\label{claim:non-increasing}
For $k\geq 3$, the function $\frac{2k^3 - p_1}{2 \sqrt{k^4 - p_1}}$ of \autoref{eq:Bbound} is decreasing, with $p_1$ being the variable, for the range of allowed values $0 \leq p_1 \leq 2k^3$.
\end{claim}
\begin{cproof}
The derivative of $\frac{2k^3 - p_1}{2 \sqrt{k^4 - p_1}}$ with respect to $p_1$ equals
\begin{equation}\label{eq:derivative1}
 \frac{p_1 + 2k^3 -2k^4}{4(k^4-p_1)^{3/2}},
\end{equation}
whose denominator is strictly positive because, for $k \geq 3$, $p_1 \leq 2k^3 < k^4$. On the other hand, the numerator satisfies
\begin{equation*}
p_1 + 2k^3 -2k^4 \leq 4k^3 -2k^4 = 2(2k^3-k^4)<0.
\end{equation*}
Thus, the function of \autoref{eq:derivative1} is strictly negative for $0 \leq p_1 \leq 2k^3$, implying that the function considered in the statement of the claim is decreasing.
\end{cproof}
By \autoref{claim:non-increasing}, the function on the right-hand size  of \autoref{eq:Bbound} can be upper-bounded by its value for $p_1=0$, which gives
\begin{equation*}\label{eq:B2}
|D|\leq  \frac{2k^3 - p_1}{2 \sqrt{k^4 - p_1}} \leq \frac{2k^3}{2 \sqrt{k^4}} = k,
\end{equation*}
concluding the proof of the theorem.
\end{proof}

\autoref{prop:equivalence-hypergraphic-polytopes} yields the following direct corollary of \autoref{thm:w2hard}.

\begin{corollary}
 Let $H = (V, \mathcal{E})$ be a hypergraph, let $v_{h_1}, v_{h_2}$ be two vertices corresponding to the acyclic orientations $h_1, h_2$ of $H$ in the polytope $P_H$, respectively, and let  $\ell$ be a positive integer. The problem of determining whether $v_{h_1}$ is at distance at most $\ell$ from $v_{h_2}$ in the polytope $P_H$ is ${\sf W}[2]$-hard when parameterized by $\ell$.
 \end{corollary}



\subsection{Inapproximability}
\label{sec:inapprox}

In this section we slightly modify the construction of the previous section to obtain inapproximability results for  {\sc Flip Distance Between Acyclic Orientations}. We rely on the result of Raz and Safra~\cite{RazS97} that \textsc{Dominating Set} cannot be approximated in polynomial time within a factor $d \cdot \log n$ for some constant $d > 0$ unless $\P = \NP$, where $n$ is the number of vertices of the input graph.

It is worth mentioning that we could also use the stronger lower bound for \textsc{Dominating Set} of $(1-\varepsilon)\cdot \ln n$ for any $\varepsilon > 0$ by Feige~\cite{Feige98} that relies on the stronger hypothesis $\NP \not\subseteq {\sf DTIME}(n^{\Ocal(\log\log n})$, and obtain a similar lower bound for the {\sc Flip Distance Between Acyclic Orientations} problem.

\begin{theorem}\label{thm:inapproximability}
  There exists some constant $c > 0$ such that the {\sc Flip Distance Between Acyclic Orientations} problem does not admit a polynomial-time $c \cdot \log (|V|+|{\mathcal E}|)$-approximation algorithm unless $\P = \NP$, where $H=(V,{\mathcal E})$ is the input hypergraph.
\end{theorem}
\begin{proof}
Let $G=(V,E)$ be the input of the \textsc{Dominating Set} problem, and let $n := |V(G)|$. By the result of Raz and Safra~\cite{RazS97}, there exists some constant $d > 0$ such that \textsc{Dominating Set} cannot be approximated in polynomial time within a factor $d \cdot \log n$, unless $\P = \NP$. We proceed to construct an instance $H,h_1,h_2$ of the optimization version of {\sc Flip Distance Between Acyclic Orientations}, that is, without the integer $\ell$ in the input. The reduction is almost the same as in the proof of \autoref{thm:w2hard}, except that the sets $S$ and $T$ now have size $n^2$ instead of $k^2$. That is, we let
\begin{itemize}
    \item $S:=\{s_i \mid 1\leq i\leq n^2\}$.
    \item $T:=\{t_j \mid 1\leq j\leq n^2\}$.
    \item $V(H):=V(G)\cup S\cup T$.
    \item $\mathcal{E}(H):=\{N[v]\cup \{s_i,t_j\} \mid v\in V(G), i,j\in[1,n^2]\}$.
\end{itemize}

We consider the following two acyclic orientations of $H$:
\begin{itemize}
    \item $h_1$: for each $e\in \mathcal{E}(H)$, $h(e):=e\cap S$.
    \item $h_2$: for each $e\in \mathcal{E}(H)$, $h(e):=e\cap T$.
\end{itemize}

Note that and that $|V(H)| = |V(G)| + |S| + |T| = 2n^2 + n$,  and that $|\mathcal{E}(H)| = |V(G)| \cdot |S| \cdot |T| = n^5$.   Let $\gamma$ be the domination number of $G$, that is, the smallest size of a dominating set in $G$. By the same analysis as in the proof of \autoref{thm:w2hard}, it is easy the verify that the optimal solution (that is, a flip sequence) of the {\sc Flip Distance Between Acyclic Orientations} with input $H,h_1,h_2$ has size  ${\sf opt}(H,h_1,h_2)= 2\gamma \cdot n^2$.

 Let $c := d / 6$, where $d$ is the inapproximability constant for \textsc{Dominating Set} defined above. To conclude the proof, we shall prove that if there exists a polynomial-time approximation algorithm with ratio $c \cdot \log (|V(H)|+|{\mathcal E}(H)|)$ for the {\sc Flip Distance Between Acyclic Orientations} problem, then there exists a polynomial-time approximation algorithm with ratio $d \cdot \log n$ for \textsc{Dominating Set}.

 To this end, suppose that we can find in polynomial time (in the size of $H$) a flipping sequence $F$ from $h_1$ to $h_2$ such that
 \begin{align}
   |F|  \ \leq \  {\sf opt}(H,h_1,h_2) \cdot c \cdot \log (|V(H)|+|{\mathcal E}(H)|)  \  = \  2\gamma \cdot n^2 \cdot c \cdot \log ( 2n^2 + n + n^5) \\
 \label{eq:F-upper-bound} \leq \  2\gamma \cdot n^2 \cdot c \cdot \log ( 4n^5)\   \leq\   2\gamma \cdot n^2 \cdot 6c \cdot \log n\  = \ 2\gamma \cdot n^2 \cdot d \cdot \log n,\hspace{1.63cm}
 \end{align}

where in the last inequality we have used that $n \geq 4$, which we can clearly assume.

  Similarly to the proof of \autoref{thm:w2hard}, that there exists some $i \in [n^2]$ such that the pair $\{s_i,t_j\}$ is not in $F$, as otherwise the number of flips would be at least $n^4$, which is larger than the upper bound given in \autoref{eq:F-upper-bound}, for $n$ large enough, because $\gamma \leq n$.
Therefore, for each $e=N[v]\cup \{s_i,t_j\}$ where $v\in V(G)$, since all orientations of  arcs incident with $s_i$ and $t_j$ need to change, there is a pair $\{s_i,b\}$ in $F$ where $b\in N[v]$, and there is as well a pair $\{b,t_j\}$ in $F$.
Let $D=\{b \in V(G)\mid \{s_i,b\}\in F\}$, which can be clearly computed in polynomial time. It follows that $D$ is a dominating set of $G$. To conclude the proof, it remains to show that $|D| \leq \gamma \cdot d \cdot \log n$.

To this end, let again $p_1$ be the number of pairs $\{s_i,t_j\}\in F$, and let $p_2=n^4-p_1$. With the same argument as in the proof of \autoref{thm:w2hard} (cf. \autoref{claim:minimized} and \autoref{eq:F2}) we can prove that
 \begin{equation}\label{eq:F2bis}
|F|\geq  p_1 + 2 \sqrt{p_2} \cdot |D|.
\end{equation}
 Using the fact that $p_1 + p_2 = n^4$, from \autoref{eq:F2bis} we get that
\begin{equation}\label{eq:Bbound2}
|D|\leq  \frac{|F| - p_1}{2 \sqrt{p_2}}  = \frac{|F| - p_1}{2 \sqrt{n^4 - p_1}}.
\end{equation}
The following claim is very similar to \autoref{claim:non-increasing}.
\begin{claim}\label{claim:non-increasing2}
For $n$ large enough, the function $\frac{|F| - p_1}{2 \sqrt{n^4 - p_1}}$ of \autoref{eq:Bbound2} is decreasing, with $p_1$ being the variable, for the range of allowed values $0 \leq p_1 \leq |F|$.
\end{claim}
\begin{cproof}
The derivative of $\frac{|F| - p_1}{2 \sqrt{n^4 - p_1}}$ with respect to $p_1$ equals
\begin{equation}\label{eq:derivative}
\frac{p_1 + |F| -2n^4}{4(n^4-p_1)^{3/2}},
\end{equation}
whose denominator is strictly positive because, using the upper bound on $|F|$ given by \autoref{eq:F-upper-bound}, we get that  $p_1 \leq |F| \leq 2\gamma \cdot n^2 \cdot d \cdot \log n \leq 2d \cdot \log n \cdot n^3 < n^4$, where the last inequality holds provided that $n$ is large enough. On the other hand, the numerator satisfies
\begin{equation*}
p_1 + |F| -2n^4 \leq 2|F| -2n^4 = 2(|F|-n^4)<0,
\end{equation*}
where the last inequality follows by the same calculation used to show that the denominator is positive. Thus, the function of \autoref{eq:derivative} is strictly negative for $0 \leq p_1 \leq |F|$, implying that the function considered in the statement of the claim is decreasing.
\end{cproof}
By \autoref{claim:non-increasing2}, the function on the right-hand size of  \autoref{eq:Bbound2} can be upper-bounded by its value for $p_1=0$, which gives, using the upper bound on $|F|$ given by \autoref{eq:F-upper-bound}, that
\begin{equation*}\label{eq:B2bis}
|D|\leq  \frac{|F| - p_1}{2 \sqrt{n^4 - p_1}} \leq \frac{|F|}{2 \sqrt{n^4}} = \frac{|F|}{2 n^2} \leq \frac{2\gamma \cdot n^2 \cdot d \cdot \log n}{2 n^2} = \gamma \cdot d \cdot \log n,
\end{equation*}
concluding the proof of the theorem.
\end{proof}

\subsection{Ruling out polynomial kernels}
\label{sec:no-kernels}

Given the ${\sf W}[2]$-hardness result of \autoref{thm:w2hard}, it is natural to identify parameterizations of the {\sc Flip Distance Between Acyclic Orientations} problem that allow for \FPT algorithms. On such a parameterization is the number of vertices of the input hypergraph. Indeed, if the input hypergraph is $H = (V, \mathcal{E})$, it holds that $|\mathcal{E}| \leq 2^{|V|}$. On the other hand,  Cardinal and Steiner~\cite[Lemma 6.1]{CardinalS25}) proved that if $h_1$, $h_2$ are distinct acyclic orientations of a hypergraph $H = (V, \mathcal{E})$, then there exists $e \in \mathcal{E}$ such that $h_1(e) \neq h_2(e)$ and $(h_1(e), h_2(e))$ is flippable in $h_1$. This result implies that it is always possible to reach the target orientation $h_2$ from the initial orientation $h_1$ by doing at most $|V|^2$ flips, namely by fixing all ``wrong'' pairs of heads one by one, and observing that once a pair is fixed, then it remains fixed until the end of the algorithm. The above discussion implies that the problem is \FPT parameterized by $|V|$. Thus, it makes sense to ask whether it admits a polynomial kernel parameterized by $|V|$. In our next result we prove that it is probably not the case. The reduction is very similar to the one given in \autoref{thm:w2hard}, but instead of reducing from \textsc{Dominating Set} parameterized by the size of the solution, we reduce from \textsc{Red-Blue Dominating Set} parameterized by the size of the ``red'' set in the bipartition, which is known not to admit polynomial kernels unless ${\sf NP} \subseteq {\sf coNP}/{\sf poly}$~\cite{DomLS14,FominLSZ19}.

\begin{theorem}\label{thm:no-kernels}
The {\sc Flip Distance Between Acyclic Orientations} problem does not admit polynomial kernels parameterized by the number of vertices of the input hypergraph, unless ${\sf NP} \subseteq {\sf coNP}/{\sf poly}$.
\end{theorem}
\begin{proof}
We present a polynomial parameter transformation (PPT) from the \textsc{Red-Blue Dominating Set} problem parameterized by the size of the ``red'' set, which is know not to admit polynomial kernels unless ${\sf NP} \subseteq {\sf coNP}/{\sf poly}$~\cite{DomLS14,FominLSZ19}. The input of \textsc{Red-Blue Dominating Set} is a bipartite graph $G = (R \cup  B, E)$, where $R$ is the set of \emph{red} vertices and $B$ is the set of \emph{blue} vertices, and an integer
$k$, and the goal is a to determine whether there exists a set $D\subseteq R$ of at most $k$ red vertices that dominates all the
blue vertices (i.e., each blue vertex is adjacent to some vertex in $D$). The parameter is $|R|$. We can assume that $|B| \geq |R|$, as otherwise the problem trivially admits a polynomial kernel parameterized by $|R|$. We can also assume that $k \leq |R|$, as otherwise the answer is clearly ``{\sf yes}''. We construct from $G = (R \cup  B, E),k$ an instance of {\sc Flip Distance Between Acyclic Orientations} as follows. The reduction is very similar to the one of \autoref{thm:w2hard}.

Namely, we construct from $G$ a hypergraph $H$ as follows:
\begin{itemize}
    \item $S:=\{s_i: 1\leq i\leq k^2\}$.
    \item $T:=\{t_j: 1\leq j\leq k^2\}$.
    \item $V(H):=R\cup S\cup T$.
    \item $\mathcal{E}(H):=\{N(v)\cup \{s_i,t_j\}: v\in B, i,j\in[1,k^2]\}$.
\end{itemize}

That is, the vertex set of the constructed hypergraph $H$ corresponds to the {\sl red} vertices together with the sets $S$ and $T$, and the hyperedges correspond to the (open) neighborhoods of the {\sl blue} vertices together with one vertex of $S$ and one vertex of $T$.

We define the initial and final acyclic orientations of $H$ in the same way as in the proof of \autoref{thm:w2hard}:
\begin{description}
    \item[$h_1$:] for each $e\in \mathcal{E}(H)$, $h(e):=e\cap S$;
    \item[$h_2$:] for each $e\in \mathcal{E}(H)$, $h(e):=e\cap T$;
\end{description}

The same argument used in the proof \autoref{thm:w2hard} shows that $G$ contains
a set $D\subseteq R$ of at most $k$ red vertices that dominates all the
blue vertices if and only if one can obtain $h_2$ from $h_1$ by flipping at most $2k^3$ pairs of vertices of $H$. Since $|V(H)| = 2k^2 + |R| \leq 2|R|^2+|R|$, this reduction is indeed a PPT from \textsc{Red-Blue Dominating Set} parameterized by $|R|$ to {\sc Flip Distance Between Acyclic Orientations} parameterized by $|V(H)|$, and the theorem follows.
\end{proof}


\section{Further research}
\label{sec:discussion}


We start with further research about computing distances on graph associahedra, and then on hypergraphic polytopes.

\medskip

We proved that the  \kelimination problem, for a general graph $G$, can be solved in time $f(k) \cdot |V(G)|$, where $f(k)$ is the function given by \autoref{thm:main}. This function is quite large, and it is worth trying to improve it. The growth of $f(k)$ is mainly driven by the number of different types of vertices (cf. \autoref{def:type}) that we consider in our marking algorithm. We need this recursive definition of type to guarantee that, when two vertices $v,v'$ have the same type, then for each possible type $\tau$ and every integer $d$ at most the bound given in \autoref{eq:bounded-diameter}, vertices $v$ and $v'$ have the same number (up to $k+1$) of descendants of  type $\tau$ within distance $d$. This is exploited, for instance, in Case 2 of the proof of \autoref{lem:main} to apply a recursive argument. It may possible to find a simpler argument in the replacement operation performed in the proof of \autoref{lem:main} (using the representative function $\rho)$, and in that case, one may allow for a less refined notion of type, leading to a better bound.

Another natural direction is to investigate whether \kelimination admits a polynomial kernel parameterized by $k$. So far, this is only known when the considered graph $G$ is a path, where even linear kernels are known\cite{cleary2009rotation,Lucas10}. As an intermediate step, one may consider graphs of bounded degree, for which it seems plausible that \autoref{lemma:restriction-to-balls} (restriction to few balls of bounded diameter) provides a helpful opening step.

\medskip

There are also several interesting questions about computing distances on hypergraphic polytopes. We proved in \autoref{thm:w2hard} that
the problem is ${\sf W}[2]$-hard when parameterized by the target distance $\ell$. It makes sense to identify additional parameters that allow for \FPT algorithms. A natural one is the \emph{rank} of the hypergraph, that is, the maximum size of a hyperedge. Note that in the reduction of \autoref{thm:w2hard}, the rank of the constructed hypergraph can be arbitrarily large. Note also that, in case the problem were \FPT with the combined parameter distance and rank, it would not admit polynomial kernels unless ${\sf NP} \subseteq {\sf coNP}/{\sf poly}$ by our negative result in \autoref{thm:no-kernels}, since the rank of a hypergraph is never larger than its number of vertices.
It is worth mentioning that the hardness result of Cardinal and Steiner on hypergraphic polytopes~\cite[Theorem 4.3]{CardinalS25} holds even if the input hypergraph has rank three (and additionally, bounded maximum degree).

Also, we proved in \autoref{thm:inapproximability} that
  there exists some constant $c > 0$ such that the {\sc Flip Distance Between Acyclic Orientations} problem does not admit a polynomial-time $c \cdot \log (|V|+|{\mathcal E}|)$-approximation algorithm unless $\P = \NP$, where $H=(V,{\mathcal E})$ is the input hypergraph. It remains open whether there exists a polynomial-time approximation algorithm matching (asymptotically) this ratio.

%



\newpage
\bibliography{References}
\end{document}